\documentclass[12pt]{article}
\usepackage[sectionbib,round]{natbib}
\usepackage{array,epsfig,fancyhdr,rotating}
\usepackage{sectsty,secdot}
\sectionfont{\fontsize{12}{14pt plus.8pt minus .6pt}\selectfont}
\subsectionfont{\fontsize{12}{14pt plus.8pt minus .6pt}\selectfont}
\usepackage{amsmath,amssymb,amsfonts,amsthm,multirow}
\usepackage[utf8]{inputenc}
\usepackage{graphicx,longtable,booktabs,verbatim}
\usepackage{algorithm,algpseudocode,bm,indentfirst}
\usepackage{needspace}

\newenvironment{breakablealgorithm}[1]{
    \par\addvspace{\intextsep}
    \Needspace{5\baselineskip}
    \refstepcounter{algorithm}
    \addcontentsline{loa}{algorithm}{\protect\numberline{\thealgorithm}#1}
    \hrule height .8pt\relax
    \nobreak\vskip 2pt
    \noindent\textbf{Algorithm \thealgorithm}\quad #1\par
    \nobreak\vskip 2pt
    \hrule height .4pt\relax
    \nobreak\vskip 2pt
}{
    \par\nobreak\vskip 2pt
    \hrule height .8pt\relax
    \addvspace{\intextsep}
}
\usepackage[hidelinks]{hyperref}
\newcommand{\E}{\mathbb E}
\newcounter{condition}
\newcommand{\condition}[2]{
    \refstepcounter{condition}
    \textbf{Condition \thecondition.} #2 \label{#1}
}
\newtheorem{theorem}{Theorem}
\newtheorem{lemma}{Lemma}
\newtheorem{corollary}[theorem]{Corollary}
\newtheorem{definition}{Definition}[section]
\newtheorem{quotedlemma}{Lemma}
\newtheorem{quotedtheorem}{Theorem}
\renewcommand{\sectionmark}[1]{}
\renewcommand{\subsectionmark}[1]{}
\hypersetup{pdftitle={Factor Model Estimation for High-Dimensional Time Series with Heteroskedastic Noise},pdfauthor={Da Zhao and Jing Chen}}
\graphicspath{{figures/}}
\numberwithin{equation}{section}
\begin{document}
\renewcommand{\baselinestretch}{1}
\fontsize{12}{14pt plus.8pt minus .6pt}\selectfont
\begingroup
\renewcommand{\baselinestretch}{1}
\fontsize{12}{14pt plus.8pt minus .6pt}\selectfont
\begin{center}
{\large\bfseries Factor Model Estimation for High-Dimensional\\[2pt]
Time Series with Heteroskedastic Noise\par}
\vspace{.4cm}
Da Zhao$^{1}$ and Jing Chen$^{2}$\\[.4cm]
{\itshape $^{1}$Yale School of Public Health, Yale University\\[2pt]
$^{2}$National University of Singapore}
\end{center}
\endgroup
\vspace{.15cm}
{\fontsize{9}{11.5pt plus.8pt minus.6pt}\selectfont
\begin{quotation}
\noindent\textit{Abstract:} Factor modeling provides a framework for extracting common components from high-dimensional time series. We develop an estimator of the factor loading space that combines contemporaneous covariance with lagged autocovariance information. Under cross-sectionally heteroskedastic white noise, the contemporaneous covariance contains heterogeneous diagonal noise contributions. We apply the existing HeteroPCA algorithm to the combined estimation matrix to correct this contamination while retaining information from serial dependence. The framework also accommodates noise contamination on a known set of entries subject to sparsity and incoherence conditions. Under regularity conditions for dependent observations, we establish consistency and derive convergence rates, identifying regimes in which incorporating contemporaneous covariance improves the rate relative to the lagged-autocovariance estimator. Simulations evaluate loading-space estimation error and the effects of contemporaneous covariance inclusion and diagonal correction. An application to S\&P 500 stock returns illustrates the method.

\vspace{9pt}
\noindent\textit{Key words and phrases:}
Factor models, heteroskedastic noise, HeteroPCA, loading-space estimation, time series.
\end{quotation}}
\fontsize{12}{14pt plus.8pt minus .6pt}\selectfont

\section{Introduction}

\indent The modern capacity for data collection has led to an abundance of time series data, with high-dimensional time series becoming increasingly prevalent across diverse fields such as economics, finance, engineering, environmental science, medical research, network traffic monitoring, image processing, and beyond. Factor modeling has emerged as a powerful tool for analyzing such data from a dimension-reduction perspective, with foundational contributions including \cite{bai2002determining}, \cite{bai2003inferential}, and \cite{stock2005implications}. Related work studies dynamic specifications and the temporal structure of common components (e.g., \cite{molenaar1985dynamic}, \cite{tiao1989model}, \cite{barigozzi2016generalized}). A critical component in constructing a factor model is determining the number of factors, a topic that has been extensively studied in the context of vector factor models (see \cite{bai2002determining}, \cite{lam2012factor}, \cite{hallin2007determining}, \cite{amengual2007consistent}). More recently, this line of research has extended to matrix factor models (\cite{yu2022projected}, \cite{chen2023statistical}, \cite{wang2019factor}, \cite{yu2024dynamic}) and tensor factor models (\cite{chen2024rank}, \cite{chen2024estimation}, \cite{chen2022factor}).

While numerous studies have explored various forms of dynamic factor models—including vector, matrix, and tensor formulations—the lagged autocovariance framework of \cite{lam2012factor} does not exploit the potentially informative contemporaneous covariance. In this work, we focus on the vector factor model and study how contemporaneous and lagged covariance information can be used jointly for loading-space estimation.

Considering the factor model proposed by \cite{lam2012factor}, we observe a linear dynamic structure in the process $\bm{y}_t$, which can be decomposed as:
\begin{equation}
\label{factor}
\bm{y}_t = \bm{A}\bm{x}_t + \bm{\epsilon}_t,
\end{equation}
where $\bm{x}_t$ is an unobserved $r \times 1$ latent factor process with $r \leq p$, $\bm{A}$ is an unknown $p \times r$ loading matrix of full column rank, and $\bm{\epsilon}_t \sim WN(\bm{\mu}_\epsilon, \bm{\Sigma}_\epsilon)$ is a white-noise vector process. This model achieves effective dimension reduction when $r \ll p$, and the process $\bm{x}_t$ is referred to as the factor process. The key objective in inference is to estimate the factor loading space $\mathcal{M}(\bm{A})$, defined as the $r$-dimensional linear subspace spanned by the columns of $\bm{A}$. Although the loading matrix depends on the choice of factor coordinates, its column space is invariant under invertible transformations of those coordinates.
Define $\bm{\Sigma}_y(k) = \operatorname{Cov}(\bm{y}_{t+k}, \bm{y}_t)$, $\bm{\Sigma}_x(k) = \operatorname{Cov}(\bm{x}_{t+k}, \bm{x}_t)$, and $\bm{\Sigma}_{x\epsilon}(k) = \operatorname{Cov}(\bm{x}_{t+k}, \bm{\epsilon}_t)$ for $k \geq 0$. Under the standard assumption that future white noise is uncorrelated with current and past factors, for $k\geq1$ we have $\bm{\Sigma}_y(k)=\bm{A}\{\bm{\Sigma}_x(k)\bm{A}^\top+\bm{\Sigma}_{x\epsilon}(k)\}$, and hence $\mathcal{M}\{\bm{\Sigma}_y(k)\}\subseteq\mathcal{M}(\bm{A})$ for each $k\geq1$. The autocovariance-based method aggregates the information across lags through
\begin{equation}
\label{M}
\bm{M} = \sum_{k=1}^{k_0} \bm{\Sigma}_y(k) \bm{\Sigma}_y(k)^\top.
\end{equation}
As shown in \cite{lam2012factor}, under the rank condition $\operatorname{rank}(\bm{M})=r$, the factor loading space $\mathcal{M}(\bm{A})$ is spanned by the eigenvectors of $\bm{M}$ corresponding to its $r$ nonzero eigenvalues. Therefore, an estimator of $\mathcal{M}(\bm{A})$ is given by the span of the $r$ leading orthonormal eigenvectors of the empirical matrix $\widehat{\bm{M}}=\sum_{k=1}^{k_0} \widehat{\bm{\Sigma}}_y(k) \widehat{\bm{\Sigma}}_y(k)^\top$, where $\widehat{\bm{\Sigma}}_y(k)=(n-k)^{-1}\sum_{t=1}^{n-k}(\bm{y}_{t+k}-\bar{\bm{y}})(\bm{y}_t-\bar{\bm{y}})^\top$ and $\bar{\bm{y}}=n^{-1}\sum_{t=1}^n\bm{y}_t$.
However, this construction of $\bm{M}$ only incorporates lagged autocovariances with $k \geq 1$, thereby excluding the potentially informative contemporaneous covariance $\bm{\Sigma}_y(0)$. Even when the aggregate rank condition holds, the lagged signal may be weak in some factor directions. In such settings, excluding a comparatively informative $\bm{\Sigma}_y(0)$ may lead to less accurate recovery of the loading space.

In this paper, we propose an estimator of the factor loading space $\mathcal{M}(\bm{A})$ that incorporates information from both the contemporaneous covariance $\bm{\Sigma}_y(0)$ and the lagged covariances used in standard autocovariance-based methods. Integrating $\bm{\Sigma}_y(0)$ can improve loading-space estimation when it contains additional factor signal that is otherwise ignored. The primary challenge in this extension is heteroskedastic noise: the entries of the noise vector $\bm{\epsilon}_t$ may have substantially different variances, which contaminate the diagonal of the contemporaneous covariance.

Heteroskedasticity has long been studied in approximate factor models, including through likelihood-based estimation \citep{bai2016efficient}. For principal subspace estimation, \cite{zhang2022heteroskedastic} introduced HeteroPCA, which iteratively imputes the diagonal of a low-rank signal matrix using its off-diagonal structure. Related developments include inference with missing data \citep{yan2024inference}, Deflated HeteroPCA \citep{zhou2025deflated}, and diagonal-deletion approaches \citep{lounici2013sparse,cai2021subspace}.

A particularly relevant application is \cite{kaebi2026covariance}, who applies HeteroPCA to the contemporaneous sample covariance of financial returns and evaluates the resulting latent factors for asset pricing. That covariance-based construction uses $\widehat{\bm{\Sigma}}_y(0)$ as the input to HeteroPCA. Our estimator instead applies HeteroPCA to $\widehat{\bm{M}}^*=\widehat{\bm{\Sigma}}_y(0)+\widehat{\bm{M}}$, incorporating serial dependence directly into the matrix used for loading-space estimation. This distinction concerns the information used by the estimator; using time-indexed observations or rolling estimation windows does not by itself introduce lagged autocovariances. HeteroPCA applied to contemporaneous covariance alone is therefore a distinct and relevant benchmark. The comparison with the lagged-only estimator studied here addresses the benefit of adding contemporaneous information and does not, by itself, establish superiority over that contemporaneous-only benchmark.

Motivated by this distinction, we incorporate HeteroPCA \citep{zhang2022heteroskedastic} into the lagged-autocovariance framework of \cite{lam2012factor}. We are not aware of existing work that combines HeteroPCA with this framework to estimate the loading space of high-dimensional vector time series from contemporaneous and lagged covariances jointly. The methodological contribution lies in the joint time-series estimation matrix, its theoretical analysis under dependent observations, and the tradeoff between contemporaneous and lagged signal strengths; the HeteroPCA imputation algorithms themselves are taken from \cite{zhang2022heteroskedastic}. The main contributions are fourfold. First, we construct a joint estimation matrix that augments the Lam--Yao lagged-autocovariance matrix with contemporaneous covariance. Second, we apply HeteroPCA to correct the heterogeneous diagonal noise contributions in this matrix. Third, we apply the existing Generalized HeteroPCA algorithm of \cite{zhang2022heteroskedastic} to the joint time-series estimation matrix when the corrupted entries lie in a known set, under sparsity and incoherence conditions. This extension concerns the time-series estimation problem rather than a new imputation algorithm. Fourth, under assumptions on temporal dependence and signal strength, we establish consistency and derive convergence rates, identifying regimes in which adding contemporaneous covariance improves the rate relative to the lagged-autocovariance estimator.

The remainder of the paper is organized as follows. Section~\ref{sec:estimation} introduces the model and outlines our estimation procedure, which incorporates both contemporaneous and lagged covariance structures via the HeteroPCA algorithm. It also describes the application of the existing Generalized HeteroPCA algorithm to noise contamination on a known set of entries. Section~\ref{sec:consistency} establishes the theoretical properties of our estimator and compares its convergence rate with that of the lagged-autocovariance estimator. Section~\ref{sec:simulation} reports two simulation studies comparing the empirical performance of our method with existing approaches. In Section~\ref{sec:realdata}, we apply our method to S\&P 500 stock returns and compare the estimated loading-space structures. Finally, Section~\ref{sec:Discussion} concludes the paper.

\section{Methodology}
\label{sec:estimation}
\subsection{Notation and Preliminaries}
We use lowercase non-bold letters (e.g., \( x, y \)) to denote scalars, lowercase bold letters (e.g., \( \bm{x}, \bm{y} \)) to denote vectors, and uppercase bold letters (e.g., \( \bm{X}, \bm{Y} \)) to denote matrices. For any matrix $\bm H$, let $\sigma_j(\bm H)$ denote its $j$th largest singular value, including zero singular values. We reserve $\lambda_j(\bm H)$ for the $j$th largest eigenvalue of a symmetric matrix. The spectral norm and Frobenius norm are defined by $\|\bm H\|=\sup_{\|u\|_2=1}\|\bm H u\|_2=\sigma_1(\bm H)$ and $\|\bm H\|_F=(\sum_{i,j}H_{ij}^2)^{1/2}=(\sum_j\sigma_j(\bm H)^2)^{1/2}$. We write $\|\bm H\|_r=\sigma_r(\bm H)$, while $\|\bm H\|_{\min}$ denotes the smallest nonzero singular value of a nonzero matrix $\bm H$. For symmetric positive semidefinite matrices, $\sigma_j(\bm H)=\lambda_j(\bm H)$. For vectors, $\|\cdot\|$ denotes the Euclidean norm. For a symmetric square matrix \( \bm{X} \), let \( \Delta(\bm{X}) \) denote the matrix formed by zeroing out the diagonal entries of \( \bm{X} \), i.e., the \textit{off-diagonal} matrix of \( \bm{X} \), and $D(\bm{A})$ be $\bm{A}$ with all off-diagonal entries set to zero. We define the Stiefel manifold \( \bm{O}_{p,r} = \{ \bm{U} \in \mathbb{R}^{p \times r} : \bm{U}^\top \bm{U} = \bm{I}_r \} \), which consists of all \( p \times r \) matrices with orthonormal columns. For \( \bm{U} \in \bm{O}_{p,r} \), let \( \bm{U}^c \in \bm{O}_{p,p-r} \) denote its orthogonal complement, such that \( [\bm{U}, \bm{U}^c] \in \mathbb{R}^{p \times p} \) forms a complete orthogonal matrix.

Let \( e_i \) be the \( i \)-th canonical basis vector in \( \mathbb{R}^p \) (i.e., a vector with 1 in the \( i \)-th coordinate and 0 elsewhere). The \textit{incoherence constant} of \( \bm{U} \in \bm{O}_{p,r} \) is defined as
\begin{equation*}
    I(\bm{U}) = \frac{p}{r} \max_{i \in [p]} \| e_i^\top \bm{U} \|_2^2.
\end{equation*}
The incoherence constant quantifies the alignment between the singular vectors of a low-rank matrix and the standard basis. It captures how uniformly the signal energy is distributed across the coordinates. A low incoherence indicates that the energy of each singular vector is evenly spread out, while a high incoherence implies that the energy is concentrated in a few coordinates—an undesirable condition for accurate recovery guarantees.

Denote by $\mathcal{G} \subset [p] \times [p]$ a set of index pairs, where $[p]$ represents the set $\{1, 2, \dots, p\}$. For any matrix $\bm{H} \in \mathbb{R}^{p \times p}$, define $G(\bm{H})$ to be the matrix $\bm{H}$ with all entries but those in $\mathcal{G}$ set to zero. Let $\Gamma(\bm{H}) = \bm{H} - G(\bm{H})$ denote the complement projection.

We define
\begin{equation}
\label{eq:mask-amplification}
\eta = \sup_{\substack{\bm{H} \in \mathbb{R}^{p \times p},\,\bm H\ne0\\ \operatorname{rank}(\bm{H}) \leq 2r}} \frac{\|G(\bm{H})\|}{\|\bm{H}\|},
\end{equation}
which quantifies the maximum relative perturbation introduced by the entries in $\mathcal{G}$ on the singular subspace.

We further assume that the set of corrupted entries $\mathcal{G}$ is $b$-sparse in the sense that
\begin{equation*}
\max_i \left|\{j : (i, j) \in \mathcal{G}\}\right| \vee \max_j \left|\{i : (i, j) \in \mathcal{G}\}\right| \leq b,
\end{equation*}
i.e., each row and each column contains at most $b$ corrupted entries.

\subsection{Models and Estimation Approach}
Following the factor model in equation (\ref{factor}) proposed by \cite{lam2012factor}, we impose additional assumptions to enable estimation using the contemporaneous covariance. 

First, we allow the noise term $\bm{\epsilon}_t$ to be heteroskedastic. Under Condition \ref{C1} below, its components are mutually independent, so that $\bm{\Sigma}_\epsilon$ is diagonal, and their variances may differ across coordinates. No minimum degree of variance heterogeneity is required, and the homoskedastic case is also included.
We also assume that no nonzero linear combination of the components of $\bm{x}_t$ is white noise. This is an additional model restriction: absorbing a white-noise factor into $\bm{\epsilon}_t$ need not preserve the diagonal noise covariance required here. In addition, we assume that the loading matrix $\bm{A}$ has full column rank $r$; otherwise, model in equation (\ref{factor}) can be equivalently represented with a lower-dimensional factor structure.
Finally, because model in equation (\ref{factor}) is invariant under the transformation $(\bm{A}, \bm{x}_t) \mapsto (\bm{A}\bm{H}, \bm{H}^{-1} \bm{x}_t)$ for any invertible $r \times r$ matrix $\bm{H}$, we may, without loss of generality, assume that the columns of $\bm{A} = (\bm{a}_1, \cdots, \bm{a}_r)$ are orthonormal. That is, $\bm{A}^\top \bm{A} = \bm{I}_r$, where $\bm{I}_r$ denotes the $r \times r$ identity matrix.
We summarize these assumptions in the following Condition \ref{C1}:

\noindent \condition{C1}{In model in equation~(\ref{factor}), $\bm{\epsilon}_t \sim WN(\bm{\mu}_\epsilon, \bm{\Sigma}_\epsilon)$, where the elements of $\bm{\epsilon}_t$ are mutually independent, so that $\bm{\Sigma}_\epsilon$ is a diagonal matrix. 
No nonzero linear combination of the components of $\bm{x}_t$ is white noise. Furthermore, $\bm{A}^\top \bm{A} = \bm{I}_r$.}

We further impose a stationarity assumption, requiring that the future white noise components are uncorrelated with the current and past factors. This assumption is stated in Condition \ref{C2} below.

\noindent \condition{C2}{The factor process $\bm{x}_t$ is weakly stationary, and for any $k \geq 0$, $\mathrm{Cov}(\bm{x}_t, \bm{\epsilon}_{t+k}) = 0$.}

Under equation (\ref{factor}), Condition \ref{C1} and Condition \ref{C2}, we could get the contemporaneous covariance
\begin{equation*}
    \bm{\Sigma_y}(0)=\bm{A}\bm{\Sigma}_x(0)\bm{A}^\top+\bm{\Sigma}_\epsilon
\end{equation*}

\noindent Combining the contemporaneous covariance with the lagged covariance as $\bm{M}$ in equation (\ref{M}), we further define the matrix $\bm{M}^*$ to be

\begin{equation*}
    \bm{M}^*= \bm{\Sigma}_y(0)+\sum_{k=1}^{k_0} \bm{\Sigma}_y(k) \bm{\Sigma}_y(k)^{\top}
\end{equation*}
Here $\bm{M}^*$ is a $p\times p$ positive semidefinite matrix.

According to Condition \ref{C1}, in the presence of heteroskedastic noise, the differences in the bias terms on the diagonal elements of $\bm{\Sigma}_\epsilon$ lead to significant discrepancies and biases in the estimation. What is required is the estimation of $\bm{M}^*_0$ as follows:

\begin{equation*}
    \bm{M}^*_0= \bm{M}^*-\bm{\Sigma}_{\epsilon}=\bm{A}\bm{\Sigma}_x(0)\bm{A}^\top+\sum_{k=1}^{k_0} \bm{\Sigma}_y(k) \bm{\Sigma}_y(k)^{\top}
\end{equation*}

\noindent Based on Conditions \ref{C1} and \ref{C2}, it follows that:

\textit{The factor loading space $\mathcal{M}(\bm{A})$ is spanned by the eigenvectors of $\bm{M}_0^*$ corresponding to its nonzero eigenvalues, and the number of the nonzero eigenvalues is $r$.}

To estimate the factor loading space $\mathcal{M}(\bm{A})$, we construct the empirical lagged-autocovariance matrix and the joint estimation matrix as
\begin{equation*}
\widehat{\bm{M}}=\sum_{k=1}^{k_0}\widehat{\bm{\Sigma}}_y(k)\widehat{\bm{\Sigma}}_y(k)^\top,\qquad \widehat{\bm{M}}^*=\widehat{\bm{\Sigma}}_y(0)+\widehat{\bm{M}},
\end{equation*}
where $\widehat{\bm{\Sigma}}_y(k) = \frac{1}{n-k}\sum_{t=1}^{n-k}(\bm{y}_{t+k} - \bar{\bm{y}})(\bm{y}_{t} - \bar{\bm{y}})^\top$ denotes the sample covariance matrix of $\bm{y}_t$ at lag $k$, for $k=0,\ldots,k_0<n$, and $\bar{\bm{y}} = \frac{1}{n}\sum_{t=1}^n \bm{y}_t$ is the sample mean.

The estimation of the eigenvectors from $\widehat{\bm{M}}^*$ can often be inconsistent due to the presence of heteroskedastic noise in the error term. Specifically, when the diagonal elements of the noise covariance matrix $\bm{\Sigma}_\epsilon$ are equal, the estimated eigenspace aligns well with the true factor loading space $\mathcal{M}(\bm{A})$. However, under heteroskedasticity—where the variances of the noise components differ—the bias introduced in the diagonal entries of $\widehat{\bm{M}}^*$ can lead to significant distortions in the estimated eigenstructure, resulting in a substantial deviation from the true factor loading space.

To mitigate this bias, we adopt the HeteroPCA algorithm proposed by \cite{zhang2022heteroskedastic} and apply it to the matrix $\widehat{\bm{M}}^*$. The core idea of HeteroPCA is to iteratively refine the diagonal entries of the estimation matrix using information from the off-diagonal elements, which are less affected by heteroskedastic noise. By doing so, the algorithm effectively suppresses the impact of variance heterogeneity on the estimation process, leading to a more accurate recovery of the low-rank structure and a more consistent estimation of the factor loading space.

There exist numerous methods for estimating the number of factors $r$, including those proposed in \cite{bai2002determining}, \cite{hallin2007determining}, \cite{amengual2007consistent}, and \cite{lam2012factor}. In this work, our primary focus is on the accurate estimation of the factor loading space $\mathcal{M}(\bm{A})$. For consistency and comparability in evaluating improvements in space estimation, we adopt the estimator of $\hat{r}$ proposed by \cite{lam2012factor}, which is particularly suitable in the context of eigenanalysis-based procedures. The estimator is defined as
\begin{equation}
\label{r}
\hat{r} = \arg \min_{1 \leq i \leq R} \frac{\hat{\lambda}_{i+1}}{\hat{\lambda}_i},
\end{equation}
where $\hat{\lambda}_1 \geq \hat{\lambda}_2 \geq \cdots \geq \hat{\lambda}_p$ are the eigenvalues of $\widehat{\bm{M}}$, and $R$ is a prespecified integer upper bound satisfying $r<R<p$.

After obtaining the estimated rank $\hat{r}$ from equation (\ref{r}), we apply the HeteroPCA algorithm to the matrix $\widehat{\bm{M}}^*$. The detailed procedure is outlined in Algorithm \ref{algorithm1}. In both algorithms below, the rank-$\hat r$ approximation follows the truncated-SVD convention of \cite{zhang2022heteroskedastic}. For a symmetric iterate, we implement it by retaining the eigenvectors associated with the $\hat r$ largest absolute eigenvalues and keeping the signed eigenvalues in the reconstruction. We denote these magnitude-ordered signed eigenvalues by $d_j^{(i)}$, so that $|d_j^{(i)}|=\sigma_j(\bm N^{(i)})$; this ordering is distinct from the algebraic ordering denoted by $\lambda_j$. 

\begin{breakablealgorithm}{HeteroPCA}
\label{algorithm1}
\begin{algorithmic}[1] 
    \State \textbf{Input:} matrix $\widehat{\bm{M}}^*$, rank $\hat{r}$, maximum number of iterations $T$. 
    \State Initialize by setting the diagonal of $\widehat{\bm{M}}^*$ to be zero: $\bm{N}^{(0)}=\Delta(\widehat{\bm{M}}^*), i=0$. 
    \State \textbf{repeat:}
    \State \quad Compute the symmetric eigendecomposition:
    \State \quad $\bm{N}^{(i)}=\bm{U}^{(i)}\bm{\Lambda}^{(i)}\bm{U}^{(i)\top}=\sum_{j=1}^p d_j^{(i)}u_j^{(i)}u_j^{(i)\top}$
    \State \quad Order $|d_1^{(i)}|\geq\cdots\geq|d_p^{(i)}|$ and retain the signed eigenvalues:
    \State \quad $\bm{\tilde{N}}^{(i)}=\sum_{j=1}^{\hat{r}}d_j^{(i)}u_j^{(i)}u_j^{(i)\top}$
    \State \quad Update \\\quad $\bm{N}^{(i+1)}=D(\tilde{\bm{N}}^{(i)})+\Delta(\bm{N}^{(i)})$\\\quad $i=i+1$
    \State \textbf{Until} convergence or maximum number of iterations reached.
    \State \textbf{Output}: $\widehat{\bm{U}}=\bm{U}^{(T)}=[u_1^\top,\cdots,u_{\hat{r}}^\top]$, $\widehat{\bm{N}}=\bm{N}^{(T)}$
\end{algorithmic}
\end{breakablealgorithm}

Our estimation procedure applies the HeteroPCA algorithm of \cite{zhang2022heteroskedastic} to the joint time-series matrix, as summarized in Algorithm~\ref{algorithm1}, and consists of two main steps. First, we compute the matrix $\widehat{\bm{M}}^*$ and estimate the rank $\hat{r}$ using the method proposed in \cite{lam2012factor}. Second, we iteratively update $\widehat{\bm{M}}^*$ to correct the bias introduced by heteroskedastic noise in $\bm{\Sigma}_\epsilon$. This correction is performed by replacing the diagonal entries of $\widehat{\bm{M}}^*$ with those from a low-rank approximation, following the HeteroPCA procedure. After reaching the maximum number of iterations, the estimated factor loading matrix $\widehat{\bm{A}}_H$ is obtained as the matrix whose columns are the eigenvectors $\widehat{\bm{U}}$, and the final estimator of $\bm{M}^*_0$ is denoted by $\widehat{\bm{M}}_{H} = \widehat{\bm{N}}$. The formal definition of $\widehat{\bm{A}}_H$ is given below.

\begin{definition}[$\widehat{\bm{A}}_H$]
\label{estimationH}
Given the output $\widehat{\bm{U}}$ from Algorithm~\ref{algorithm1}, with input matrix $\widehat{\bm{M}}^*$, rank $\hat{r}$ calculated based on \cite{lam2012factor}, and maximum number of iterations $T$, we define $\widehat{\bm{A}}_H$ as the matrix whose columns are the eigenvectors $\widehat{\bm{U}}$.
\end{definition}

\subsection{Extension to the Generalized Case}

In the previous section, our model and estimation procedure were developed under the framework of the spiked covariance model with heteroskedastic noise. Specifically, in Condition~\ref{C1}, we assumed that $\bm{\epsilon}_t \sim WN(\bm{\mu}_\epsilon, \bm{\Sigma}_\epsilon)$, where the components of $\bm{\epsilon}_t$ are mutually independent, implying that $\bm{\Sigma}_\epsilon$ is a diagonal matrix. 

Here, we extend the framework to a \emph{generalized spiked covariance model}, in which $\bm{\Sigma}_\epsilon$ need not be diagonal. Our theoretical results concern noise covariance matrices whose support is contained in a known symmetric, $b$-sparse index set $\mathcal G\subseteq[p]\times[p]$, subject to Condition~\ref{C6} and the other assumptions stated in the theorems. Thus, the theory accommodates structured non-diagonal noise covariance matrices, rather than arbitrary dense covariance matrices. Under this generalized setting, Condition~\ref{C1} is replaced by the following:

\noindent \condition{C3}{In the model defined by equation~(\ref{factor}), $\bm{\epsilon}_t \sim WN(\bm{\mu}_\epsilon, \bm{\Sigma}_\epsilon)$. 
No nonzero linear combination of the components of $\bm{x}_t$ is white noise. Furthermore, $\bm{A}^\top \bm{A} = \bm{I}_r$.}

Unlike the classical spiked covariance model, the generalized setting allows for a broader structure of the noise covariance matrix. We assume that the corrupted entries lie in a known index set $\mathcal{G} \subset [p] \times [p]$, which is not necessarily restricted to the diagonal. 

For this setting, we apply the existing Generalized HeteroPCA algorithm of \cite{zhang2022heteroskedastic} (their Algorithm~2) to the joint time-series estimation matrix $\widehat{\bm{M}}^*$. Algorithm~\ref{algorithm2} presents this application using our notation: the procedure iteratively imputes the entries indexed by $\mathcal{G}$. The imputation algorithm itself is not new; our focus is its use with contemporaneous and lagged covariance information and its statistical properties for dependent observations. In the special case where $\mathcal{G}$ corresponds to the diagonal entries, Algorithm~\ref{algorithm2} reduces to Algorithm~\ref{algorithm1}.

\begin{breakablealgorithm}{Generalized HeteroPCA}
\label{algorithm2}
\begin{algorithmic}[1] 
    \State \textbf{Input:} matrix $\widehat{\bm{M}}^*$, rank $\hat{r}$, maximum number of iterations $T$, corruption subset $\mathcal{G}\subset [p]\times [p]$. 
    \State Initialize by setting the entries indexed by $\mathcal G$ to zero: $\bm{N}^{(0)}=\Gamma(\widehat{\bm{M}}^*), i=0$. 
    \State \textbf{repeat:}
    \State \quad Compute the symmetric eigendecomposition:
    \State \quad $\bm{N}^{(i)}=\bm{U}^{(i)}\bm{\Lambda}^{(i)}\bm{U}^{(i)\top}=\sum_{j=1}^p d_j^{(i)}u_j^{(i)}u_j^{(i)\top}$
    \State \quad Order $|d_1^{(i)}|\geq\cdots\geq|d_p^{(i)}|$ and retain the signed eigenvalues:
    \State \quad $\bm{\tilde{N}}^{(i)}=\sum_{j=1}^{\hat{r}}d_j^{(i)}u_j^{(i)}u_j^{(i)\top}$
    \State \quad Update \\\quad $\bm{N}^{(i+1)}=G(\tilde{\bm{N}}^{(i)})+\Gamma(\bm{N}^{(i)})$\\\quad $i=i+1$
    \State \textbf{Until} convergence or maximum number of iterations reached.
    \State \textbf{Output}: $\widehat{\bm{U}}=\bm{U}^{(T)}=[u_1^\top,\cdots,u_{\hat{r}}^\top]$, $\widehat{\bm{N}}=\bm{N}^{(T)}$
\end{algorithmic}
\end{breakablealgorithm}

Similarly to the HeteroPCA algorithm (Algorithm~\ref{algorithm1}), the output of the Generalized HeteroPCA algorithm (Algorithm~\ref{algorithm2}) consists of two elements: the estimated eigenvectors \( \widehat{\bm{U}} \) and the matrix \( \widehat{\bm{N}} \). Upon reaching the maximum number of iterations, the estimated factor loading matrix \( \widehat{\bm{A}}_{\mathrm{GH}} \) is formed by taking the columns of \( \widehat{\bm{U}} \), and the final estimator of \( \bm{M}_0^* \) is denoted by \( \widehat{\bm{M}}_{\mathrm{GH}} = \widehat{\bm{N}} \). The formal definition of $\widehat{\bm{A}}_{GH}$ is given below.

\begin{definition}[$\widehat{\bm{A}}_{GH}$]
\label{estimationGH}
Given the output $\widehat{\bm{U}}$ from Algorithm~\ref{algorithm2}, with input matrix $\widehat{\bm{M}}^*$, rank $\hat{r}$ calculated based on \cite{lam2012factor}, maximum number of iterations $T$, and corruption subset $\mathcal{G}\subset [p]\times [p]$, we define $\widehat{\bm{A}}_{GH}$ as the matrix whose columns are the eigenvectors $\widehat{\bm{U}}$.
\end{definition}

\section{Theory}
\label{sec:consistency}
In this section, we investigate the asymptotic properties of the relevant estimators under a set of regularity conditions. Our focus is on the convergence rates of the estimator $\widehat{\bm{A}}_{GH}$ obtained from Algorithm \ref{algorithm2}. Algorithm \ref{algorithm1} can be viewed as a special case of Algorithm \ref{algorithm2} by setting $\mathcal{G}$ to correspond to the diagonal entries.

To begin the theoretical analysis of Algorithm \ref{algorithm2}, we first consider the case where the dimension $p$ is fixed and the sample size $n \to \infty$, followed by the scenario where both $n \to \infty$ and $p \to \infty$. We start by introducing the necessary regularity conditions. Let $\lambda_1, \ldots, \lambda_p$ denote the eigenvalues of the matrix $\bm{M}$.

\noindent \condition{C4}{$\bm{y}_t$ is strictly stationary and $\psi$-mixing with the mixing coefficients $\psi(\cdot)$ satisfying the condition that $\sum_{t\geq 1}t\psi(t)^{1/2}<\infty$. Furthermore, $E\{|\bm{y}_t|^4\}<\infty$ element-wisely.}

\noindent \condition{C5}{$\lambda_1>\cdots>\lambda_r>0=\lambda_{r+1}=\cdots=\lambda_p$.}

\noindent \condition{C6}{Assume $\mathcal{G}\subseteq [p]\times [p]$ is $b$-sparse, and let $\eta$ be defined by \eqref{eq:mask-amplification}. For a sufficiently small constant $c>0$, the population incoherence condition $\frac{I(\bm{A})\|\bm{M}^*_0\|}{\|\bm{M}_0^*\|_{r}}\leq \frac{cp}{\eta br(b\wedge r)}$ holds. In addition, the sample perturbation inequality $\eta\|\Gamma(\widehat{\bm{M}}^*)-\Gamma(\bm{M}^*_0)\|\leq c\|\bm{M}_0^*\|_{r}$ holds with probability tending to one as $n\to\infty$.}

Condition \ref{C4} is standard in the factor modeling literature (see \cite{lam2012factor}, \cite{chen2022factor}). Condition \ref{C5} assumes that the \( r \) nonzero eigenvalues are mutually distinct. Condition \ref{C6} imposes a population incoherence requirement that depends on the corruption pattern, the rank, and the ratio of the largest to the smallest nonzero signal eigenvalue. It also requires the projected sample perturbation to be small relative to the smallest nonzero signal eigenvalue. Although the exact value of \( \eta \) can be difficult to determine in general, it satisfies \( \eta \leq \sqrt{b \wedge (2r)} \) for all \( b \)-sparse sets \( \mathcal{G} \) (\cite{zhang2022heteroskedastic}).

Under the above regularity conditions, we can establish the convergence of the estimator \( \widehat{\bm{A}}_{GH} \) when \( p \) is fixed.

\begin{theorem}
\label{Hetep1}
Under Conditions \ref{C2}--\ref{C6}, consider a fixed population model with fixed $p$, $r$, and $k_0$, where $1\leq r<p$. Assume $\eta>0$, and that the known corruption set is symmetric and contains the support of the noise covariance:
\begin{equation*}
(i,j)\in\mathcal G\Longleftrightarrow(j,i)\in\mathcal G,
\qquad
\operatorname{supp}(\bm\Sigma_\epsilon)\subseteq\mathcal G.
\end{equation*}
First run Algorithm \ref{algorithm2} with the true rank $r$. Choose an integer iteration count $T\geq1$ satisfying
\begin{equation}
\label{eq:theorem1-iterations}
2^{-T}/\eta=O(n^{-1/2}).
\end{equation}
its output satisfies
\begin{equation*}
\min_{\bm R\in\bm O_{r,r}}
\|\widehat{\bm A}_{GH}\bm R-\bm A\|=O_p(n^{-1/2}).
\end{equation*}
If the rank is estimated, additionally assume $\Pr(\widehat r=r)\to1$.
On the correct-rank event the same aligned comparison applies, and the loading-space error, defined even when $\widehat r\ne r$, satisfies
\begin{equation*}
\|\widehat{\bm A}_{GH}\widehat{\bm A}_{GH}^{\top}
-\bm A\bm A^\top\|=O_p(n^{-1/2}).
\end{equation*}
\end{theorem}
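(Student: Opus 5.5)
The plan is to reduce the statement to the deterministic guarantee for Generalized HeteroPCA in \cite{zhang2022heteroskedastic} (their Theorem on Algorithm~2). We treat $\bm M^*_0$ as the rank-$r$ signal and $\widehat{\bm M}^*-\bm M^*_0$ as the perturbation. Then we control the only perturbation that enters that guarantee, namely the off-$\mathcal G$ part $\Gamma(\widehat{\bm M}^*)-\Gamma(\bm M^*_0)$, at the rate $O_p(n^{-1/2})$ using the mixing conditions.

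\textbf{Step 1 (population structure).} Under Conditions \ref{C2}--\ref{C3}, $\bm M^*_0=\bm A\bm\Sigma_x(0)\bm A^\top+\sum_k\bm\Sigma_y(k)\bm\Sigma_y(k)^\top$. Every term has column space inside $\mathcal M(\bm A)$. The matrix $\bm\Sigma_x(0)$ is positive definite, since no linear combination of $\bm x_t$ is white noise and hence none is degenerate. So $\bm M^*_0$ is PSD of rank exactly $r$, with leading eigenspace $\mathcal M(\bm A)$. This also shows $\|\bm M^*_0\|_r>0$ is a fixed constant when $p$ is fixed. Because $\operatorname{supp}(\bm\Sigma_\epsilon)\subseteq\mathcal G$, we have $\Gamma(\bm M^*)=\Gamma(\bm M^*_0)$. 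The input of Algorithm~\ref{algorithm2} therefore only sees $\bm M^*_0$ plus sampling error off $\mathcal G$, and the noise covariance is discarded exactly. Symmetry of $\mathcal G$ keeps every iterate symmetric, so the symmetric eigendecomposition convention matches the SVD truncation in \cite{zhang2022heteroskedastic}.

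\textbf{Step 2 (sampling error).} Under Condition~\ref{C4}, with $\psi$-mixing, $\sum t\psi(t)^{1/2}<\infty$ and finite fourth moments, the standard covariance inequality for mixing sequences gives $\|\widehat{\bm\Sigma}_y(k)-\bm\Sigma_y(k)\|=O_p(n^{-1/2})$ for $k=0,\dots,k_0$. This is the same bound used in \cite{lam2012factor}. Centering by $\bar{\bm y}$ contributes $O_p(n^{-1})$. Since $p$ and $k_0$ are fixed, expanding $\widehat{\bm\Sigma}_y(k)\widehat{\bm\Sigma}_y(k)^\top-\bm\Sigma_y(k)\bm\Sigma_y(k)^\top$ into cross terms yields $\|\widehat{\bm M}^*-\bm M^*\|=O_p(n^{-1/2})$. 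Hence $\|\Gamma(\widehat{\bm M}^*)-\Gamma(\bm M^*_0)\|=O_p(n^{-1/2})$, because the entrywise mask costs at most a factor depending on $p$, which is fixed.

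\textbf{Step 3 (invoke HeteroPCA bound).} On the event from Condition~\ref{C6} (probability $\to1$), both hypotheses of the deterministic theorem hold: the incoherence/condition-number bound and $\eta\|\Gamma(\widehat{\bm M}^*)-\Gamma(\bm M^*_0)\|\le c\|\bm M^*_0\|_r$. That theorem then gives, after $T$ iterations,
\[
\|\sin\Theta(\widehat{\bm U},\bm A)\|\lesssim \frac{\eta\|\Gamma(\widehat{\bm M}^*)-\Gamma(\bm M^*_0)\|}{\|\bm M^*_0\|_r}+2^{-T}.
\]
I expect the main obstacle to be this step. The task is to check that the cited theorem's contraction argument, whose geometric term is $2^{-T}$ up to constants, applies verbatim with the signed-eigenvalue truncation and with our non-i.i.d.\ perturbation. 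Since that theorem is deterministic, dependence is harmless; only the constants in the $2^{-T}$ term need care, and the assumption $2^{-T}/\eta=O(n^{-1/2})$ is intended to absorb them, given that $\eta\le\sqrt{b\wedge 2r}$ is bounded. With Step 2 and fixed $\|\bm M^*_0\|_r$, the right side is $O_p(n^{-1/2})$.

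\textbf{Step 4 (conversion).} The bound $\min_{\bm R}\|\widehat{\bm A}_{GH}\bm R-\bm A\|\le\sqrt2\,\|\sin\Theta\|$ gives the first claim. The identity $\|\widehat{\bm A}\widehat{\bm A}^\top-\bm A\bm A^\top\|=\|\sin\Theta\|$ gives the projector bound. For estimated rank, we decompose on $\{\widehat r=r\}$ and its complement. On the complement, the error is bounded by $1$ on an event of vanishing probability, so it is $o_p(1)$ in probability sense. Because $\Pr(\widehat r\neq r)\to0$, this still yields $O_p(n^{-1/2})$, since $O_p$ statements ignore events of vanishing probability.
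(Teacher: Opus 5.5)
Your proposal is correct, and it reaches the conclusion by a shorter route than the paper.

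\textbf{Your route.} You invoke the stated $\sin\Theta$ conclusion of Zhang et al.'s Theorem~7 directly for the leading eigenvectors of $\bm N^{(T)}$. You then convert it via $\min_{\bm R}\|\widehat{\bm A}_{GH}\bm R-\bm A\|\le\sqrt2\,\|\sin\Theta\|$ and $\|\widehat{\bm A}_{GH}\widehat{\bm A}_{GH}^\top-\bm A\bm A^\top\|=\|\sin\Theta\|$.

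\textbf{The paper's route.} The paper extracts from the \emph{proof} of that theorem a spectral-norm bound on the imputed matrix, $\|\bm N^{(T)}-\bm M_0^*\|\le2\|\Gamma(\widehat{\bm M}^*-\bm M^*)\|+\|\bm M_0^*\|_{\min}2^{-(T+4)}/\eta$. It then runs its own eigenspace perturbation step. It writes $\bm M_0^*=\bm A D\bm A^\top$ with gap $g=\lambda_{\min}(D)\ge\lambda_r(\bm M)>0$. It applies the Stewart-type Lemma~\ref{quoted:lam2011-lemma3} of Lam, Yao and Bathia to get an explicit aligned basis $(\bm A+\bm A^c\bm P)(\bm I_r+\bm P^\top\bm P)^{-1/2}$ with $\|\bm P\|\le4\|\bm F_n\|/g$. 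Finally, it uses Weyl's inequality to identify this basis with the leading eigenspace.

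\textbf{What each buys.} The paper's detour yields the by-product $\|\widehat{\bm M}_{GH}-\bm M_0^*\|=O_p(n^{-1/2})$ for the imputed matrix. It is also the template reused in Theorem~\ref{Hetep2}, where the gap $g_p\asymp q_p$ drives the rate. Your route relies only on a published statement rather than an intermediate inequality from a proof. It also needs no separate identification of the leading eigenspace, since for symmetric iterates the top-$r$ singular vectors are the top-$r$ absolute-eigenvalue eigenvectors, as you note.

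\textbf{Minor tightenings.}
\begin{enumerate}
\item Justify $\lambda_r(\bm M_0^*)>0$ through Condition~\ref{C5}: $\bm M_0^*=\bm A\bm\Sigma_x(0)\bm A^\top+\bm M\succeq\bm M$ gives $\lambda_r(\bm M_0^*)\ge\lambda_r(\bm M)>0$. Positive definiteness of $\bm\Sigma_x(0)$ is a weaker basis, because it depends on whether a degenerate constant combination counts as white noise.
\item Your displayed bound misplaces $\eta$; the cited form is $4\|\Gamma(Z)\|/\lambda_r+2^{-(T+3)}/\eta$. This is harmless. A nonempty $\mathcal G$ forces $\eta\ge1$ (take $\bm H=e_ie_j^\top$ with $(i,j)\in\mathcal G$), and $\eta\le\sqrt{2r}$, so all constants are comparable.
\item Note that $I(\bm U)=I(\bm A)$ because $\bm U=\bm A\bm V$ for an orthogonal $\bm V$. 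This is what lets the incoherence in Condition~\ref{C6} feed the cited theorem.
\end{enumerate}
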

\vspace{1\baselineskip}

Theorem \ref{Hetep1} shows that the convergence rate under fixed \( p \) matches that of \cite{lam2012factor}. The key distinction arises when \( p \) diverges with \( n \), which will be explored in the subsequent analysis. To proceed, we first introduce additional regularity conditions.

\noindent \condition{C7}{Let $0\leq\delta_0\leq\delta\leq1$. For $k=1,\ldots,k_0$, assume $\|\bm\Sigma_x(k)\|\asymp p^{1-\delta}\asymp\sigma_r\{\bm\Sigma_x(k)\}$, where $\sigma_r$ denotes the $r$th largest singular value, including zero singular values. At lag zero, assume $\|\bm\Sigma_x(0)\|\asymp p^{1-\delta_0}\asymp\lambda_{\min}\{\bm\Sigma_x(0)\}$.}

\noindent \condition{C8}{For $k=1,\ldots,k_0$, $\|\bm\Sigma_{x\epsilon}(k)\|=o(p^{1-\delta})$. At lag zero, $\bm\Sigma_{x\epsilon}(0)=0$ already follows from Condition \ref{C2}.}

\noindent \condition{C9}{The moment and mixing bounds in Condition \ref{C4} hold uniformly in the dimension: writing $\bm z_t=\bm y_t-E\bm y_t$ and $\psi_p(h)$ for the mixing coefficient at dimension $p$, assume $\sup_p\max_{1\leq a\leq p}E|z_{t,a}|^4\leq C$ and $\sup_p\sum_{h=1}^{\infty}h\,\psi_p(h)^{1/2}\leq C$ for a constant $C$ independent of $n,p$.}

Condition \ref{C7} separates contemporaneous factor strength, indexed by $\delta_0$, from lagged factor strength, indexed by $\delta$. These strength assumptions do not follow merely from $\bm A^\top\bm A=\bm I_r$. The restriction $\delta_0\leq\delta$ is consistent with stationarity, which implies $\|\bm\Sigma_x(k)\|\leq\|\bm\Sigma_x(0)\|$. Condition \ref{C8} controls the positive-lag factor--noise cross-covariances, not the contemporaneous signal strength. Condition \ref{C9} makes explicit the dimension-uniform assumptions used in Lemma \ref{Mdiff}.

\begin{theorem}
\label{Hetep2}
Under Conditions \ref{C2}--\ref{C9}, let $r$ and $k_0\geq1$ be fixed as $n,p\to\infty$. Assume the known set $\mathcal G$ is symmetric, $\operatorname{supp}(\bm\Sigma_\epsilon)\subseteq\mathcal G$, $\eta>0$, and Algorithm \ref{algorithm2} is run with the true rank $r$. Write $q_p=\max\{p^{2-2\delta},p^{1-\delta_0}\}$ and $\rho_{n,p}=p^{2-\delta}n^{-1/2}/q_p$. Suppose $p^\delta n^{-1/2}=O(1)$ and $\rho_{n,p}\to0$. Choose an integer iteration count $T\geq1$ such that $2^{-T}/\eta=O(\rho_{n,p})$. Then
\begin{equation}
\label{eq:theorem2-aligned-and-projector-rate}
\begin{aligned}
\min_{\bm R\in\bm O_{r,r}}\|\widehat{\bm A}_{GH}\bm R-\bm A\|
&=O_p\left(\min\{p^\delta n^{-1/2},
p^{\delta_0-\delta+1}n^{-1/2}\}\right),\\
\|\widehat{\bm A}_{GH}\widehat{\bm A}_{GH}^\top-\bm A\bm A^\top\|
&=O_p\left(\min\{p^\delta n^{-1/2},
p^{\delta_0-\delta+1}n^{-1/2}\}\right).
\end{aligned}
\end{equation}
\end{theorem}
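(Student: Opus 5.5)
The plan is to combine a deterministic perturbation bound for Generalized HeteroPCA with probabilistic bounds on the sample matrix, and to track separately the lag-zero and lagged contributions to the signal.

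\textbf{Step 1: deterministic bound.} First I invoke the deterministic guarantee for Generalized HeteroPCA in \cite{zhang2022heteroskedastic}, in the form already used for Theorem~\ref{Hetep1}. Let $\bm U\in\bm O_{p,r}$ span $\mathcal M(\bm M_0^*)=\mathcal M(\bm A)$. Under the incoherence part of Condition~\ref{C6}, and on the event where the sample perturbation part of Condition~\ref{C6} holds, the $T$-th iterate satisfies
\begin{equation*}
\|\widehat{\bm A}_{GH}\widehat{\bm A}_{GH}^\top-\bm A\bm A^\top\|
\lesssim \frac{\eta\|\Gamma(\widehat{\bm M}^*-\bm M^*_0)\|}{\|\bm M_0^*\|_r}+\frac{2^{-T}}{\eta}.
\end{equation*}
Because $\operatorname{supp}(\bm\Sigma_\epsilon)\subseteq\mathcal G$, we have $\Gamma(\bm M^*)=\Gamma(\bm M^*_0)$, so the numerator involves only $\Gamma(\widehat{\bm M}^*-\bm M^*)$. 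The $2^{-T}/\eta$ term is $O(\rho_{n,p})$ by the choice of $T$. The aligned bound $\min_{\bm R}\|\widehat{\bm A}_{GH}\bm R-\bm A\|$ then follows from the projector bound by the standard $\sin\Theta$ relation, up to a factor $\sqrt2$.

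\textbf{Step 2: signal size.} Next I bound $\|\bm M_0^*\|_r$ from below. By Conditions~\ref{C7}--\ref{C8},
\begin{equation*}
\bm\Sigma_y(k)\bm\Sigma_y(k)^\top=\bm A\{\bm\Sigma_x(k)\bm A^\top+\bm\Sigma_{x\epsilon}(k)\}\{\cdots\}^\top
\end{equation*}
has its $r$ nonzero eigenvalues of order $p^{2-2\delta}$, and $\bm A\bm\Sigma_x(0)\bm A^\top$ has its $r$ nonzero eigenvalues of order $p^{1-\delta_0}$. Both terms are positive semidefinite with column space $\mathcal M(\bm A)$, so
\begin{equation*}
\|\bm M_0^*\|_r\asymp\|\bm M_0^*\|\asymp q_p=\max\{p^{2-2\delta},p^{1-\delta_0}\}.
\end{equation*}

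\textbf{Step 3: noise size.} For the perturbation, $\|\Gamma(\bm H)\|\le\|\bm H\|+\|G(\bm H)\|$. I expect the step of controlling $\|G(\bm H)\|$ for a non-low-rank $\bm H$ to be the main obstacle. I would first try bounding it by $b$-sparsity through Schur's test, $\|G(\bm H)\|\le b\max|H_{ij}|$, together with entrywise concentration of the sample autocovariances. Failing that, I would enlarge the rate by an $\eta$-type factor absorbed into constants.

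For $\|\widehat{\bm M}^*-\bm M^*\|$ I use Lemma~\ref{Mdiff} under Conditions~\ref{C4} and~\ref{C9}. This gives $\|\widehat{\bm\Sigma}_y(k)-\bm\Sigma_y(k)\|=O_p(pn^{-1/2})$ for $k=0,\dots,k_0$. Expanding
\begin{equation*}
\widehat{\bm\Sigma}\widehat{\bm\Sigma}^\top-\bm\Sigma\bm\Sigma^\top=(\widehat{\bm\Sigma}-\bm\Sigma)\bm\Sigma^\top+\bm\Sigma(\widehat{\bm\Sigma}-\bm\Sigma)^\top+(\widehat{\bm\Sigma}-\bm\Sigma)(\widehat{\bm\Sigma}-\bm\Sigma)^\top
\end{equation*}
and using $\|\bm\Sigma_y(k)\|\asymp p^{1-\delta}$ gives $\|\widehat{\bm M}-\bm M\|=O_p(p^{2-\delta}n^{-1/2}+p^2n^{-1})$. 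Under $p^\delta n^{-1/2}=O(1)$ this is $O_p(p^{2-\delta}n^{-1/2})$. The lag-zero term contributes $O_p(pn^{-1/2})$, which is dominated since $p\le p^{2-\delta}$. Hence $\|\Gamma(\widehat{\bm M}^*-\bm M_0^*)\|=O_p(p^{2-\delta}n^{-1/2})$. Dividing by $q_p$ gives $\rho_{n,p}\to0$, which also verifies the sample inequality in Condition~\ref{C6} with probability tending to one.

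\textbf{Step 4: combining.} The error is therefore $O_p(\rho_{n,p})$. Finally, split into the two cases of the maximum defining $q_p$:
\begin{equation*}
\rho_{n,p}=\frac{p^{2-\delta}n^{-1/2}}{\max\{p^{2-2\delta},p^{1-\delta_0}\}}=\min\{p^{\delta}n^{-1/2},\,p^{1+\delta_0-\delta}n^{-1/2}\}.
\end{equation*}
This is exactly the rate claimed in \eqref{eq:theorem2-aligned-and-projector-rate}.
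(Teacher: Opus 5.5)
Your Steps 1, 2 and 4 are fine. Applying the $\sin\Theta$ form of Zhang et al.'s Theorem 7 directly is a legitimate shortcut compared with the paper's route, which bounds $\|\bm N^{(T)}-\bm M_0^*\|$ and then invokes the Lam--Yao--Bathia invariant-subspace lemma. The extra $\eta$ you put in the numerator is harmless, because $\eta\le\sqrt{b\wedge 2r}\le\sqrt{2r}$.

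The gap is in Step 3. You correctly note that you need $\|\Gamma(\widehat{\bm M}^*-\bm M^*)\|$ rather than $\|\widehat{\bm M}^*-\bm M^*\|$, but you never close this step; the ``Hence'' just asserts the conclusion. Neither proposed route delivers it:
\begin{itemize}
\item The fallback is invalid. $\eta$ is a supremum over matrices of rank at most $2r$, whereas $\bm H=\widehat{\bm M}^*-\bm M^*$ is generically of full rank (the lag-zero error alone is). So $\eta$ gives no control of $\|G(\bm H)\|$.
\item For an arbitrary $\bm H$, the available mask bound is $\|\Gamma(\bm H)\|\le(1+\sqrt b)\|\bm H\|$. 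Schur's test gives $\|G(\bm H)\|\le b\max_{i,j}|H_{ij}|$. Both carry a factor growing with $b$, and the theorem does not assume $b$ bounded; it may grow with $p$.
\item Entrywise concentration does not remove this factor. Under only uniform fourth moments, a Chebyshev union bound over $p^2$ entries controls $\max_{i,j}|H_{ij}|$ no better than at the rate of $\|\bm H\|_F$, leaving you with $b\,p^{2-\delta}n^{-1/2}$.
\end{itemize}

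The missing observation is that the sampling bound you need is a Frobenius-norm bound, which is what Lemma~\ref{Mdiff} actually provides. Since $\Gamma$ only zeroes entries,
\[
\|\Gamma(\bm H)\|\le\|\Gamma(\bm H)\|_F\le\|\bm H\|_F
=O_p\bigl(p^{2-\delta}n^{-1/2}+p^{2}n^{-1}\bigr),
\]
with no factor of $b$ or $\eta$. To use this, carry your product expansion in Frobenius norm. Write $\bm E_k=\widehat{\bm\Sigma}_y(k)-\bm\Sigma_y(k)$ and $\bm S_k=\bm\Sigma_y(k)$, and use $\|\bm E_k\bm S_k^\top\|_F\le\|\bm E_k\|_F\|\bm S_k\|$ and $\|\bm E_k\bm E_k^\top\|_F\le\|\bm E_k\|_F^2$, with $\|\bm E_k\|_F=O_p(pn^{-1/2})$ and $\|\bm S_k\|=O(p^{1-\delta})$. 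This gives the same rate you computed, and under $p^\delta n^{-1/2}=O(1)$ it reduces to $O_p(p^{2-\delta}n^{-1/2})$. This is exactly how the paper handles the step, and with it your Steps 1 and 4 go through unchanged.
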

The growth condition $p^\delta n^{-1/2}=O(1)$ absorbs the quadratic sampling term; the separate requirement $\rho_{n,p}\to0$ ensures a small perturbation relative to the signal eigengap. The stronger condition $p^\delta n^{-1/2}\to0$ suffices for both. The iteration condition controls the geometric remainder directly and also covers a zero projected perturbation. The result concerns subspace estimation, not an unaligned eigenvector matrix, and is an upper bound rather than a claim of an optimal rate.

For the lagged-autocovariance estimator $\widehat{\bm A}_M$ of \cite{lam2012factor}, the benchmark loading-space error bound is $\|\widehat{\bm A}_M\widehat{\bm A}_M^\top-\bm A\bm A^\top\|=O_p(p^\delta n^{-1/2})$ under the corresponding regularity and growth conditions. We use projection-matrix error here, consistently with the rotation-invariant conclusion of Theorem \ref{Hetep2}.

Under the assumptions of Theorem \ref{Hetep2}, the upper bound for the joint estimator is $O_p(\min\{p^\delta n^{-1/2},p^{1+\delta_0-\delta}n^{-1/2}\})$. The second term has a strictly smaller power of $p$ when $2\delta>1+\delta_0$, equivalently when $\delta>1/2$ and $\delta_0<2\delta-1$. This regime corresponds to weaker positive-lag factor covariance and a sufficiently strong contemporaneous factor covariance in the sense of Condition \ref{C7}. Otherwise, the bound has the same order as the lagged-only benchmark.

This comparison concerns the orders of the stated error upper bounds. It does not establish rate optimality, a matching lower bound for either estimator, or uniform finite-sample superiority of the proposed method.

Next, we consider Algorithm \ref{algorithm1} as the special case $\mathcal G=\{(i,i):1\leq i\leq p\}$. Then Algorithm \ref{algorithm2} reduces to Algorithm \ref{algorithm1}, and
\begin{equation}
\label{eq:diagonal-mask-parameters}
b=1,\qquad
\eta=\sup_{\substack{\bm H\in\mathbb R^{p\times p},\,\bm H\ne0\\
\operatorname{rank}(\bm H)\leq2r}}
\frac{\|D(\bm H)\|}{\|\bm H\|}=1.
\end{equation}
The upper bound follows from $|H_{ii}|\leq\|\bm H\|$, and equality is attained by $\bm H=e_i e_i^\top$. The population part of Condition \ref{C6} becomes the following condition; its sample perturbation part is verified in the proof of the corollary.

\noindent \condition{C10}{For a sufficiently small constant $c_I>0$, the population incoherence condition $\frac{I(\bm A)\|\bm M_0^*\|}{\|\bm M_0^*\|_r}\leq c_Ip/r$ holds.}

\begin{corollary}
\label{Hetep3}
Under Conditions \ref{C1}, \ref{C2}, \ref{C4}, \ref{C5}, and \ref{C7}--\ref{C10}, let $r$ and $k_0\geq1$ be fixed as $n,p\to\infty$. Run Algorithm \ref{algorithm1} with the true rank $r$. With $q_p$ and $\rho_{n,p}$ defined in Theorem \ref{Hetep2}, assume $p^\delta n^{-1/2}=O(1)$ and $\rho_{n,p}\to0$. Choose an integer $T\geq1$ such that $2^{-T}=O(\rho_{n,p})$. Then
\begin{equation}
\label{eq:corollary-diagonal-subspace-rate}
\begin{aligned}
\min_{\bm R\in\bm O_{r,r}}\|\widehat{\bm A}_H\bm R-\bm A\|
&=O_p\left(\min\{p^\delta n^{-1/2},
p^{1+\delta_0-\delta}n^{-1/2}\}\right),\\
\|\widehat{\bm A}_H\widehat{\bm A}_H^\top-\bm A\bm A^\top\|
&=O_p\left(\min\{p^\delta n^{-1/2},
p^{1+\delta_0-\delta}n^{-1/2}\}\right).
\end{aligned}
\end{equation}
\end{corollary}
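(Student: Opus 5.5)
The plan is to obtain Corollary~\ref{Hetep3} as a direct specialization of Theorem~\ref{Hetep2} to $\mathcal G=\{(i,i):1\le i\le p\}$. By \eqref{eq:diagonal-mask-parameters}, this choice gives $b=1$ and $\eta=1>0$. With this $\mathcal G$, Algorithm~\ref{algorithm2} coincides step by step with Algorithm~\ref{algorithm1}: here $\Gamma=\Delta$ and $G=D$. Hence $\widehat{\bm A}_{GH}=\widehat{\bm A}_H$, and the iteration requirement $2^{-T}/\eta=O(\rho_{n,p})$ becomes the stated $2^{-T}=O(\rho_{n,p})$. The proof therefore reduces to checking each hypothesis of Theorem~\ref{Hetep2}.

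\textbf{Structural hypotheses.} Condition~\ref{C1} implies Condition~\ref{C3}, and it makes $\bm\Sigma_\epsilon$ diagonal. So $\operatorname{supp}(\bm\Sigma_\epsilon)\subseteq\mathcal G$, and the diagonal set is clearly symmetric. Next I would check the implication from Condition~\ref{C9} to Condition~\ref{C4}. For each fixed $p$, the uniform fourth-moment and summable-mixing bounds give those of Condition~\ref{C4}; strict stationarity is assumed in Condition~\ref{C4} directly, since it is listed among the hypotheses. For the population half of Condition~\ref{C6}, substituting $b=1$, $\eta=1$ and $b\wedge r=1$ gives the right-hand side $cp/r$. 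This is exactly Condition~\ref{C10} with $c=c_I$.

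\textbf{Sample perturbation inequality (main obstacle).} The remaining hypothesis is the sample half of Condition~\ref{C6}:
\[
\|\Delta(\widehat{\bm M}^*)-\Delta(\bm M_0^*)\|\le c\|\bm M_0^*\|_r
\]
with probability tending to one. Since $\bm M_0^*$ and $\bm M^*$ differ only on the diagonal, the left side equals $\|\Delta(\widehat{\bm M}^*-\bm M^*)\|$. I would bound this by $2\|\widehat{\bm M}^*-\bm M^*\|$, using $\|D(\bm H)\|\le\max_i|H_{ii}|\le\|\bm H\|$.

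The perturbation then splits into $\|\widehat{\bm\Sigma}_y(0)-\bm\Sigma_y(0)\|$ and $\|\widehat{\bm M}-\bm M\|$. I would control both with Lemma~\ref{Mdiff} under Conditions~\ref{C7}--\ref{C9}, as in the proof of Theorem~\ref{Hetep2}.
\begin{itemize}
\item The lag-zero term is $O_p(pn^{-1/2})$.
\item The lagged term is $O_p(p^{2-\delta}n^{-1/2}+p^2n^{-1})$.
\item Under $p^\delta n^{-1/2}=O(1)$, we have $p^2n^{-1}\lesssim p^{2-\delta}n^{-1/2}$, and $pn^{-1/2}\le p^{2-\delta}n^{-1/2}$.
\end{itemize}
So the total perturbation is $O_p(p^{2-\delta}n^{-1/2})$.

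For the denominator, I would show that $\|\bm M_0^*\|_r\gtrsim q_p$. Both summands of $\bm M_0^*$ are positive semidefinite with range $\mathcal M(\bm A)$. By Conditions~\ref{C7}--\ref{C8}, $\lambda_r\{\bm A\bm\Sigma_x(0)\bm A^\top\}\asymp p^{1-\delta_0}$ and $\lambda_r(\bm M)\asymp p^{2-2\delta}$, so $\lambda_r(\bm M_0^*)\gtrsim\max\{p^{1-\delta_0},p^{2-2\delta}\}=q_p$.

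The ratio of perturbation to eigengap is then $O_p(\rho_{n,p})$. Since $\rho_{n,p}\to0$, the inequality holds with probability tending to one for any fixed $c$.

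\textbf{Conclusion.} All hypotheses of Theorem~\ref{Hetep2} hold, so both bounds in \eqref{eq:theorem2-aligned-and-projector-rate} transfer verbatim to $\widehat{\bm A}_H$, giving \eqref{eq:corollary-diagonal-subspace-rate}.

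The delicate point is the sample perturbation step. It relies on Lemma~\ref{Mdiff} giving spectral-norm bounds at exactly these orders, and on the bound $\|\Delta(\bm H)\|\le2\|\bm H\|$. If Lemma~\ref{Mdiff} is only stated for the lagged matrix $\widehat{\bm M}$, I would bound the lag-zero term separately. I would use Frobenius-norm second moments under Condition~\ref{C9} and the mixing covariance inequality to obtain $O_p(pn^{-1/2})$.
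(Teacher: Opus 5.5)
Your proposal is correct and follows essentially the same route as the paper. You specialize Theorem~\ref{Hetep2} to the diagonal set with $b=\eta=1$, identify Condition~\ref{C10} with the population part of Condition~\ref{C6}, and verify the sample part via $\|\Delta(\widehat{\bm M}^*-\bm M^*)\|\le 2\|\widehat{\bm M}^*-\bm M^*\|$ together with Lemmas~\ref{Mmin} and~\ref{Mdiff}, which gives a ratio of order $O_p(\rho_{n,p})=o_p(1)$. Your side checks, deriving Condition~\ref{C4} from Condition~\ref{C9} and the fallback in case Lemma~\ref{Mdiff} covered only $\widehat{\bm M}$, are unnecessary: Condition~\ref{C4} is assumed directly, and Lemma~\ref{Mdiff} is already stated for the joint matrix $\widehat{\bm M}^*$.
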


Corollary \ref{Hetep3} gives the diagonal-noise specialization of Theorem \ref{Hetep2}, with the same rate under the corresponding assumptions.

\section{Simulation}
\label{sec:simulation}
We consider two simulation settings: (i) increasing sample size $n$ with fixed dimension $p$ and factor number $r$; and (ii) increasing $n$ and $p$ with fixed $r$. The comparisons are designed to assess the contribution of contemporaneous and lagged covariance information, as well as the effect of correcting heteroskedastic noise.

We introduce four loading-space estimators within a common comparison framework. Our proposed estimator $\widehat{\bm A}_H$ applies HeteroPCA to the joint matrix $\widehat{\bm M}^*$ and is defined in Definition~\ref{estimationH}. The three benchmarks are $\widehat{\bm A}_P$, obtained by uncorrected PCA of $\widehat{\bm M}^*$; $\widehat{\bm A}_M$, the lagged-only estimator of \cite{lam2012factor} based on $\widehat{\bm M}$; and $\widehat{\bm A}_0$, obtained by applying HeteroPCA to the contemporaneous sample covariance $\widehat{\bm\Sigma}_y(0)$, following the covariance-based construction in \cite{kaebi2026covariance}. Their definitions are given together below. The definitions use a common rank $\hat r$, which can be selected by the eigenvalue-ratio rule in~\eqref{r}, and a common iteration limit $T$ for the two HeteroPCA estimators. In the simulations below, we instead set $\hat r=r$ to isolate loading-space estimation from factor-number selection.

\begin{definition}[$\widehat{\bm A}_P$]
\label{estimationP}
We define $\widehat{\bm A}_P\in\mathbb R^{p\times\hat r}$ as the matrix whose columns are orthonormal eigenvectors of $\widehat{\bm M}^*$ corresponding to its $\hat r$ largest eigenvalues. No diagonal correction is applied.
\end{definition}

\begin{definition}[$\widehat{\bm A}_M$]
\label{estimationM}
We define $\widehat{\bm A}_M\in\mathbb R^{p\times\hat r}$ as the matrix whose columns are orthonormal eigenvectors of $\widehat{\bm M}$ corresponding to its $\hat r$ largest eigenvalues. Its input matrix uses only positive-lag sample autocovariances.
\end{definition}

\begin{definition}[$\widehat{\bm A}_0$]
\label{estimation0}
Apply Algorithm~\ref{algorithm1} with input matrix $\widehat{\bm\Sigma}_y(0)$ in place of $\widehat{\bm M}^*$, rank $\hat r$, and iteration limit $T$. We define $\widehat{\bm A}_0\in\mathbb R^{p\times\hat r}$ as the matrix whose columns are the resulting orthonormal loading-space estimates. Conditional on the chosen rank, its matrix input contains no positive-lag autocovariances.
\end{definition}

Comparing $\widehat{\bm A}_H$ with $\widehat{\bm A}_P$ isolates the effect of diagonal correction for the joint matrix. Comparing $\widehat{\bm A}_H$ with $\widehat{\bm A}_M$ assesses the effect of adding contemporaneous information with heteroskedasticity correction, while comparison with $\widehat{\bm A}_0$ assesses the additional contribution of positive-lag information when both estimators use HeteroPCA.

Since only the column space of $\bm A$ is identifiable and the loading matrices have orthonormal columns, we evaluate estimation accuracy through projection-matrix error. Following \cite{chen2022factor} and \cite{zhang2024factor}, we use the Frobenius norm and define
\begin{equation}
\label{eq:error_metric}
\begin{aligned}
    \text{Error}_{H} &= \| \bm A\bm A^\top - \widehat{\bm A}_{H}\widehat{\bm A}_{H}^\top \|_F,\\
    \text{Error}_{P} &= \| \bm A\bm A^\top - \widehat{\bm A}_{P}\widehat{\bm A}_{P}^\top \|_F,\\
    \text{Error}_{M} &= \| \bm A\bm A^\top - \widehat{\bm A}_{M}\widehat{\bm A}_{M}^\top \|_F,\\
    \text{Error}_{0} &= \| \bm A\bm A^\top - \widehat{\bm A}_{0}\widehat{\bm A}_{0}^\top \|_F.
\end{aligned}
\end{equation}
These errors are invariant to orthogonal rotations of the estimated loading matrices, and smaller values indicate more accurate loading-space estimation. Both experiments compare all four estimators using these errors.

\noindent\textbf{Experiment I.}
We set $p=10$, $r=1$, $k_0=1$, and $n\in\{20,40,\ldots,10000\}$, and generate the $10$-dimensional observations from
\begin{equation*}
\begin{aligned}
    \bm y_t&=\bm A\bm x_t+\bm\epsilon_t,\qquad
    \bm A\in\mathbb R^{10\times1},\quad \bm A^\top\bm A=1,\\
    \bm x_t&=0.98\bm x_{t-1}+\bm u_t,\qquad
    \bm u_t\overset{\mathrm{iid}}{\sim}N(0,0.0198p),\\
    \bm\epsilon_t\mid\bm\Sigma_\epsilon
    &\overset{\mathrm{iid}}{\sim}N_{10}(\bm0,\bm\Sigma_\epsilon).
\end{aligned}
\end{equation*}
We obtain $\bm A$ by QR-normalizing ten independent $N(0,1)$ draws and keep it fixed throughout. In each replication, $\bm x_1\sim N(0,0.0198p)$ is independent of later innovations; we generate $2n$ factor values and retain the last $n$. The stationary factor variance is $0.5p=5$, although initialization is not stationary. The diagonal $\bm\Sigma_\epsilon$ assigns variance 41 to three uniformly selected coordinates and variance 1 to the other seven. It is fixed within each replication and redrawn between replications; conditional on it, noise is independent across coordinates and time and independent of the factors and loadings. For each $n$, all four methods use the same 500 datasets and the true rank. Covariances use full-sample centering and denominator $n-k$. Both HeteroPCA methods use at most 50 iterations with tolerance $10^{-5}$ for relative Frobenius change; all runs are retained.

\begin{figure}[!htb]
    \centering
    \includegraphics[width=0.78\linewidth,height=0.32\textheight,keepaspectratio]{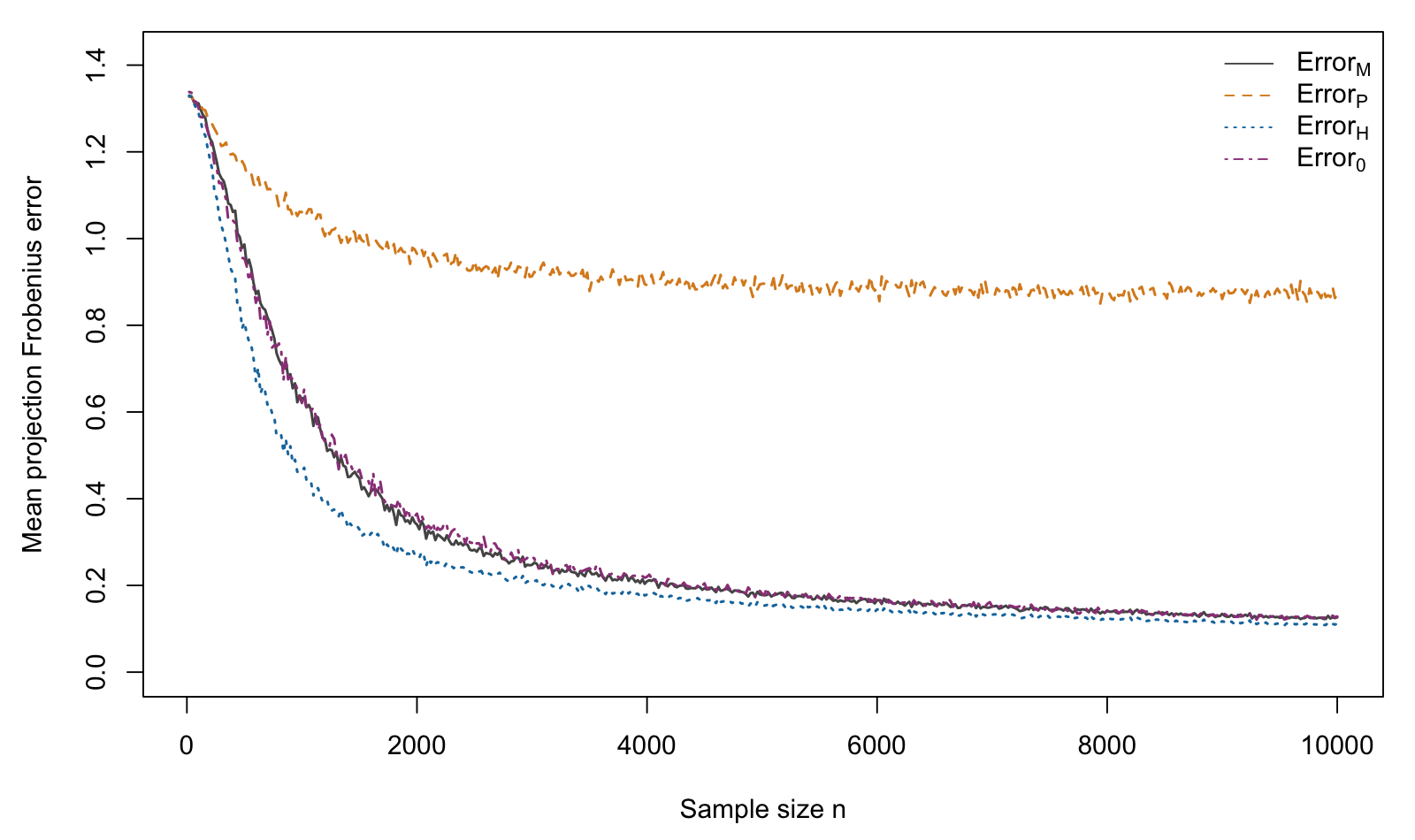}
    \caption{Experiment I: mean projection errors over 500 replications, with $p=10$, $r=1$, and $k_0=1$. Solid, dashed, dotted, and dot-dashed curves represent $\text{Error}_M$, $\text{Error}_P$, $\text{Error}_H$, and $\text{Error}_0$, respectively; see~\eqref{eq:error_metric}.}
    \label{fig:1}
\end{figure}

Figure~\ref{fig:1} shows a pronounced plateau in $\text{Error}_P$, which remains $0.8706$ at $n=10000$: uncorrected PCA retains the heterogeneous diagonal noise in $\bm M^*=\bm M_0^*+\bm\Sigma_\epsilon$, whose leading eigenspace differs from the target in this design. In contrast, $\text{Error}_M$, $\text{Error}_H$, and $\text{Error}_0$ decrease overall with $n$. The lagged-only estimator avoids this population contamination because the independent white noise contributes no positive-lag covariance. The two HeteroPCA estimators instead iteratively correct diagonal contamination, using the joint matrix for $\widehat{\bm A}_H$ and contemporaneous covariance alone for $\widehat{\bm A}_0$. Their declining errors are consistent with these mechanisms, but do not establish finite-sample unbiasedness. The joint HeteroPCA curve reaches low mean-error levels with fewer observations than either benchmark over most of the grid, consistent with combining the informative lagged signal from the persistent factor process with contemporaneous information and diagonal correction.

\textbf{Experiment II.}
We use model~\eqref{factor} with $r=3$, $n\in\{200,400\}$, $p/n\in\{0.2,0.5,0.8,1.2\}$, and $k_0\in\{1,2,3\}$, giving 24 configurations including $p>n$. In each replication, $\bm A$ is redrawn by orthonormalizing a standard Gaussian $p\times r$ matrix, and
\begin{equation*}
\begin{aligned}
    \bm x_t&=\operatorname{diag}(0.90,-0.94,0.98)\bm x_{t-1}+\bm u_t,\\
    \bm u_t&\overset{\mathrm{iid}}{\sim}
    N_3\!\left(\bm0,p\,\operatorname{diag}(0.057,0.03492,0.01188)\right).
\end{aligned}
\end{equation*}
The stationary factor covariance is $\bm\Sigma_x(0)=0.3p\bm I_3$. The Gaussian noise follows Experiment I, with exactly $30\%$ of coordinates randomly assigned variance 41 and the remainder variance 1, independently of the factors and loadings. We retain $n$ observations after a burn-in of $n$ and use 200 replications per configuration at the true rank. Within each $(n,p)$, the same datasets are used across methods and lag choices.

Figure~\ref{fig:exp2-four} shows the four mean projection errors across the $(k_0,n,p)$ configurations; detailed numerical results are reported in Table~\ref{tab1} in the supplementary appendix. All four mean errors decrease as $n$ increases from 200 to 400 at fixed $p/n$ and $k_0$, and as $p$ increases at fixed $n$ and $k_0$. For example, at $n=400$ and $k_0=3$, increasing $p$ from 80 to 480 reduces the mean $\text{Error}_H$ from $1.1545$ to $0.9269$. The proposed estimator has the smallest mean error in all 24 configurations.

Across the configurations in Table~\ref{tab1}, relative to $\widehat{\bm A}_M$, $\widehat{\bm A}_P$, and $\widehat{\bm A}_0$, the mean-error reductions for $\widehat{\bm A}_H$ range from $3.03\%$ to $13.36\%$, $3.05\%$ to $14.28\%$, and $0.65\%$ to $22.66\%$, respectively. These comparisons support the joint use of contemporaneous information, serial dependence, and diagonal correction in the tested configurations. The improvement with dimension occurs under the specified factor-strength scaling $\bm\Sigma_x(0)=0.3p\bm I_3$, rather than establishing that larger $p$ improves estimation in general.

\begin{figure}[!htb]
    \centering
    \includegraphics[width=\linewidth,height=0.50\textheight,keepaspectratio]{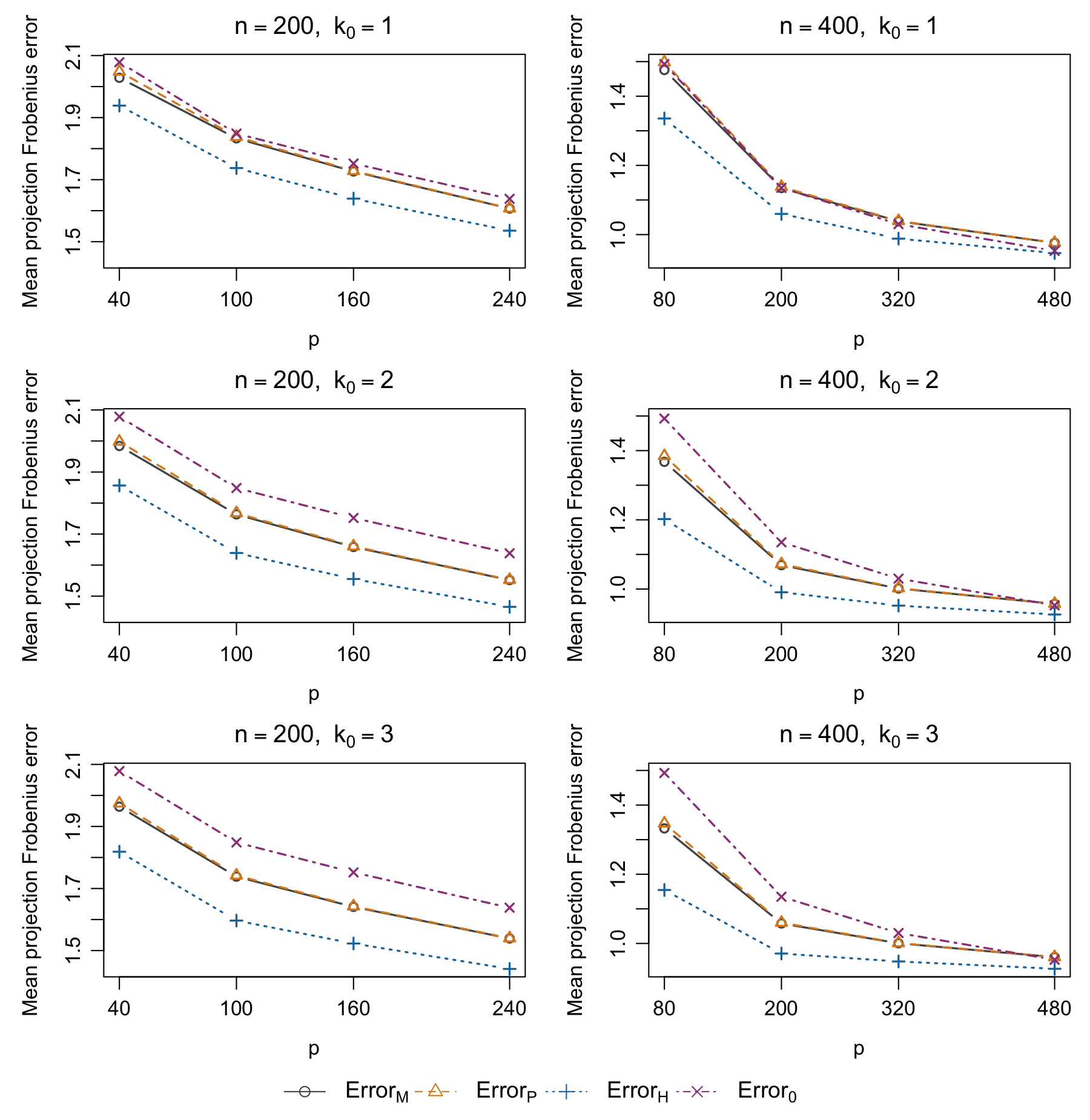}
    \caption{Experiment II: mean projection errors over 200 replications per configuration, with $r=3$. Columns correspond to $n=200$ and $n=400$; rows correspond to $k_0=1,2,3$. The four curves show $\text{Error}_M$, $\text{Error}_P$, $\text{Error}_H$, and $\text{Error}_0$ from~\eqref{eq:error_metric}. Each column uses a common vertical scale; lines connect the simulated settings without smoothing.}
    \label{fig:exp2-four}
\end{figure}

\section{Daily returns for S\&P 500 stocks}
\label{sec:realdata}
\label{sec:sp500}
Figures~\ref{fig:sp500-main-initial}--\ref{fig:sp500-main-sensitivity} display the three stock analyses. Full-size figures are provided in Section~\ref{sec:supp-realdata-figures} of the Supplementary Material.

The stock panel contains $p=110$ daily return series observed over $n=1634$ trading days, from January 3, 2002, to June 30, 2008. At each date, $\bm y_t$ collects the natural-log changes in the Yahoo Finance adjusted prices of these stocks. We use the archived returns in decimal units, without further differencing or coordinatewise variance standardization. Thus, differences in marginal return variability are retained. This is a fixed subset of stocks, not a reconstruction of the historical S\&P 500 membership at each date.

We examine the structure of the estimated loading spaces through varimax-rotated loading matrices. The stocks span 11 sectors, with the largest groups being Health Care (21 stocks), Information Technology (17), Financials (16), and Industrials (16). Sector labels are matched to the September 25, 2026, iShares IVV holdings file and used only to order the rows of the heatmaps. The archived holdings snapshot and the ticker-to-sector mapping accompany the analysis code. The source is the \href{https://www.ishares.com/us/products/239726/ishares-core-s-p-500-etf/latest-holdings.csv}{official iShares IVV holdings file}. The archived ticker MMC is matched to its renamed ticker MRSH for classification only; the return series is unchanged. These are current classifications rather than historical classifications for 2002--2008, so the sector interpretations below should be understood with this limitation. Sector labels do not enter any of the four estimators.

We first use the initial subsample of 1144 observations, ending July 19, 2006, with $k_0=1$. Each sample is centered separately, and its lag-$k$ covariance uses denominator equal to its length minus $k$. The eigenvalue-ratio rule of \citet{lam2012factor}, applied to $\widehat{\bm M}$ over candidate dimensions $1,\ldots,9$, selects $\hat r=2$, and all four estimators use this dimension. To examine the structure over the entire observation period, we also fit the full panel with $k_0=2$. Applying the same rank-selection rule to this full-sample matrix gives $\hat r=3$. This is a separate descriptive analysis using the entire observation period. Finally, we repeat the full-sample analysis with $k_0=1$ to assess sensitivity to the lag choice. All loading spaces are estimated directly from the return panel in a single stage. Both HeteroPCA methods use tolerance $10^{-5}$ and an iteration cap of 50.

Within each specification, we rotate the four estimated orthonormal loading bases using varimax and then align column signs and permutations to those of $\widehat{\bm A}_H$ for display. These operations leave the estimated loading spaces unchanged. Rows are ordered first by sector and then alphabetically by ticker, and all four method panels within each specification share a grayscale with symmetric limits. The columns are labelled F1, F2, and F3 only for reference: they are not uniquely identified economic factors or a ranking of factor strength. In each figure, the four blocks show $\widehat{\bm A}_M$, $\widehat{\bm A}_P$, $\widehat{\bm A}_H$, and $\widehat{\bm A}_0$ from left to right. Lighter and darker shades indicate lower and higher signed loadings; grayscale ranges differ across figures.

\begin{figure}[!htbp]
    \centering
    \includegraphics[page=1,width=0.80\linewidth,height=0.34\textheight,keepaspectratio]{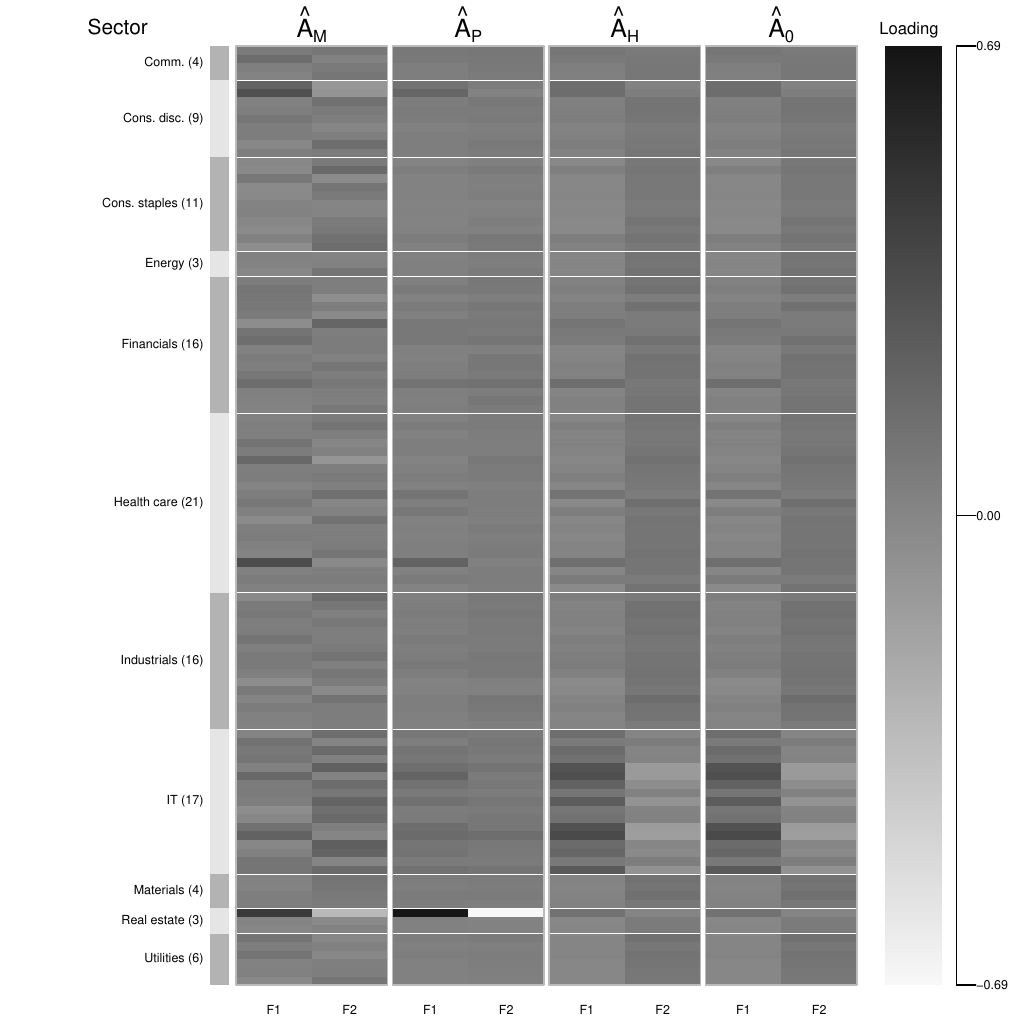}
    \caption{Stock loading matrices for the initial subsample ($n=1144$, $k_0=1$, $\hat r=2$).}
    \label{fig:sp500-main-initial}
\end{figure}

\begin{figure}[!htbp]
    \centering
    \includegraphics[page=2,width=0.80\linewidth,height=0.34\textheight,keepaspectratio]{sp500_loading_structure.pdf}
    \caption{Stock loading matrices for the full sample with two lags ($n=1634$, $k_0=2$, $\hat r=3$).}
    \label{fig:sp500-main-full}
\end{figure}

\begin{figure}[!htbp]
    \centering
    \includegraphics[page=3,width=0.80\linewidth,height=0.34\textheight,keepaspectratio]{sp500_loading_structure.pdf}
    \caption{Stock loading matrices for the full sample with one lag ($n=1634$, $k_0=1$, $\hat r=1$).}
    \label{fig:sp500-main-sensitivity}
\end{figure}

Figure~\ref{fig:sp500-main-initial} displays the two-factor estimates from the initial training sample. For $\widehat{\bm A}_H$, the ten largest absolute loadings in F1 all correspond to Information Technology stocks, including NVDA, AMD, MU, AMAT, and TXN. F2 has sizeable loadings across several sectors, rather than representing a single sector. In contrast, the direct joint estimator $\widehat{\bm A}_P$ places substantial weight on one stock, EQIX. Its squared row norm, equivalently the corresponding diagonal entry of $\widehat{\bm A}_P\widehat{\bm A}_P^\top$, is $0.9224$, whereas the largest squared row norm for $\widehat{\bm A}_H$ is $0.1529$. Because this measure is invariant to orthogonal rotation, the difference reflects the estimated spaces rather than just the choice of displayed basis. The correction produces a less concentrated loading space in this sample, although the unobserved true loadings prevent us from interpreting concentration alone as estimation bias.

Figure~\ref{fig:sp500-main-full} shows the three-factor structure obtained from the full sample with $k_0=2$. For the proposed estimator, F1 again has large loadings on technology stocks, including NVDA, AMD, MU, AMAT, and TXN. F2 has relatively large loadings on LOW, TGT, HD, NSC, and TJX, spanning consumer and industrial businesses. F3 has large loadings on several health-care stocks, including REGN, GILD, AMGN, HUM, and UNH, together with stocks from other sectors. Thus, the estimated space admits a representation with sector-related patterns, but not a separation into three pure sector factors. The direct joint PCA estimate remains highly concentrated on EQIX, with a squared row norm of $0.9153$, compared with a maximum of $0.1806$ for the proposed estimator.

An important feature of the first two figures is the close agreement between $\widehat{\bm A}_H$ and $\widehat{\bm A}_0$. Their normalized projection distances, defined as $\|\widehat{\bm A}_H\widehat{\bm A}_H^\top-\widehat{\bm A}_0\widehat{\bm A}_0^\top\|_F/\sqrt{2\hat r}$, are $8.77\times10^{-5}$ in the initial training sample and $7.19\times10^{-4}$ in the full-sample two-lag analysis. Correspondingly, $\|\widehat{\bm M}\|/\|\widehat{\bm\Sigma}_y(0)\|$ is $1.81\times10^{-4}$ and $1.78\times10^{-4}$, respectively. At the original decimal-return scale, the contemporaneous term therefore dominates the joint estimation matrix. The heatmaps show structural differences associated with diagonal correction, but do not demonstrate a substantial additional benefit from the lagged term in this panel.

Figure~\ref{fig:sp500-main-sensitivity} illustrates sensitivity to the lag choice. For the full sample with $k_0=1$, the rule selects one factor, but the ratios at dimensions one and two are nearly tied ($0.2391$ and $0.2413$). With $k_0=2$, the first three ratios are $0.3712$, $0.5750$, and $0.3390$, selecting dimension three. Thus, the sector interpretations remain conditional on the sample and lag specification.

\section{Discussion}
\label{sec:Discussion}
This paper integrates contemporaneous covariance and lagged autocovariance information into loading-space estimation by applying HeteroPCA to a joint estimation matrix. This construction retains serial dependence information while correcting heterogeneous diagonal noise contamination. Under the stated assumptions, we establish consistency and convergence-rate bounds for dependent observations, identifying signal-strength regimes in which incorporating contemporaneous information improves the upper bound relative to the lagged-only estimator. The analysis also accommodates structured non-diagonal noise with known support through Generalized HeteroPCA. Simulations and an application to stock returns illustrate the estimation performance and loading-space structure of the proposed approach.

Several directions merit further investigation. Broader comparisons with alternative heteroskedastic factor estimators and low-rank recovery methods would help clarify the relative strengths of the proposed approach across noise and dependence structures. Such comparisons could assess both loading-space estimation and out-of-sample prediction. Other directions include data-adaptive weighting of contemporaneous and lagged information, joint selection of the factor number and lag order, and extensions to matrix- and tensor-valued time series.

\section*{Supplementary Material}
The Supplementary Material contains the detailed Experiment II results in Table~\ref{tab1}, full-size figures for the S\&P 500 stock-return application (Section~\ref{sec:supp-realdata-figures}), and proofs of Theorems~\ref{Hetep1} and~\ref{Hetep2}, the proof of Corollary~\ref{Hetep3}, and referenced theorems and lemmas (Section~\ref{sec:appendix}).

\begingroup
\bibhang=1.7pc
\bibsep=2pt
\fontsize{9}{14pt plus.8pt minus .6pt}\selectfont
\bibliographystyle{ss-chicago}
\nocite{bathia2010identifying,lam2011estimation}
\bibliography{manuscript-references}
\endgroup
\par\Needspace{11\baselineskip}
\vskip .65cm
{\fontsize{9}{14pt plus.8pt minus .6pt}\selectfont
\noindent Da Zhao (corresponding author)\\
Department of Biostatistics\\
Yale School of Public Health\\
New Haven, CT, USA\\
E-mail: \href{mailto:zhao.da@yale.edu}{zhao.da@yale.edu}

\vskip .3cm
\noindent Jing Chen\\
Department of Statistics and Data Science, National University of Singapore.\\
E-mail: \href{mailto:e0729952@u.nus.edu}{e0729952@u.nus.edu}\par}

\clearpage
\setcounter{section}{0}
\setcounter{subsection}{0}
\setcounter{equation}{0}
\setcounter{figure}{0}
\setcounter{table}{0}
\setcounter{lemma}{0}
\setcounter{quotedlemma}{0}
\setcounter{quotedtheorem}{0}
\renewcommand{\thefigure}{S\arabic{figure}}
\renewcommand{\thetable}{S\arabic{table}}
\renewcommand{\theHfigure}{S\arabic{figure}}
\renewcommand{\theHtable}{S\arabic{table}}
\renewcommand{\thesection}{S\arabic{section}}
\renewcommand{\thesubsection}{\thesection.\arabic{subsection}}
\renewcommand{\theequation}{\thesection.\arabic{equation}}
\renewcommand{\theHsection}{S\arabic{section}}
\renewcommand{\theHlemma}{supp.\arabic{lemma}}
\renewcommand{\theHquotedlemma}{supp.\arabic{quotedlemma}}
\renewcommand{\theHquotedtheorem}{supp.\arabic{quotedtheorem}}
\begin{center}\bfseries Supplementary Material\end{center}

{\fontsize{9}{11.5pt plus.8pt minus.6pt}\selectfont
\noindent This supplement contains the detailed simulation results in Table~\ref{tab1}, full-size figures for the S\&P 500 stock-return application (Section~\ref{sec:supp-realdata-figures}), and proofs of Theorems~\ref{Hetep1} and~\ref{Hetep2}, the proof of Corollary~\ref{Hetep3}, and referenced theorems and lemmas (Section~\ref{sec:appendix}).\par}
\fontsize{12}{14pt plus.8pt minus .6pt}\selectfont
\section{Experiment II Results}
\label{sec:supp-simulation-figures}

\begingroup
    \small
    \setlength{\tabcolsep}{5pt}
    \renewcommand{\arraystretch}{1.0}
    \begin{longtable}{rrrrrrr}
    \caption{Experiment II: mean projection errors over 200 replications per configuration. The smallest mean in each row is bold.}
    \label{tab1}\\
    \toprule
    $k_0$ & $n$ & $p$ & $\text{Error}_M$ & $\text{Error}_P$ & $\text{Error}_H$ & $\text{Error}_0$ \\
    \midrule
    \endfirsthead

    \multicolumn{7}{c}{Table~\thetable\ (continued)}\\
    \toprule
    $k_0$ & $n$ & $p$ & $\text{Error}_M$ & $\text{Error}_P$ & $\text{Error}_H$ & $\text{Error}_0$ \\
    \midrule
    \endhead
    \midrule
    \endfoot
    \bottomrule
    \endlastfoot
1 & 200 & 40 & 2.0283 & 2.0482 & \textbf{1.9386} & 2.0784 \\
1 & 200 & 100 & 1.8332 & 1.8381 & \textbf{1.7373} & 1.8485 \\
1 & 200 & 160 & 1.7263 & 1.7283 & \textbf{1.6389} & 1.7519 \\
1 & 200 & 240 & 1.6069 & 1.6076 & \textbf{1.5356} & 1.6382 \\
1 & 400 & 80 & 1.4757 & 1.4981 & \textbf{1.3355} & 1.4927 \\
1 & 400 & 200 & 1.1345 & 1.1382 & \textbf{1.0598} & 1.1350 \\
1 & 400 & 320 & 1.0386 & 1.0396 & \textbf{0.9883} & 1.0301 \\
1 & 400 & 480 & 0.9762 & 0.9763 & \textbf{0.9466} & 0.9528 \\
\midrule
2 & 200 & 40 & 1.9839 & 1.9984 & \textbf{1.8566} & 2.0784 \\
2 & 200 & 100 & 1.7635 & 1.7680 & \textbf{1.6391} & 1.8485 \\
2 & 200 & 160 & 1.6584 & 1.6607 & \textbf{1.5553} & 1.7519 \\
2 & 200 & 240 & 1.5518 & 1.5528 & \textbf{1.4655} & 1.6382 \\
2 & 400 & 80 & 1.3676 & 1.3847 & \textbf{1.2022} & 1.4927 \\
2 & 400 & 200 & 1.0689 & 1.0721 & \textbf{0.9907} & 1.1350 \\
2 & 400 & 320 & 1.0015 & 1.0025 & \textbf{0.9520} & 1.0301 \\
2 & 400 & 480 & 0.9572 & 0.9574 & \textbf{0.9264} & 0.9528 \\
\midrule
3 & 200 & 40 & 1.9637 & 1.9747 & \textbf{1.8185} & 2.0784 \\
3 & 200 & 100 & 1.7383 & 1.7423 & \textbf{1.5966} & 1.8485 \\
3 & 200 & 160 & 1.6408 & 1.6427 & \textbf{1.5227} & 1.7519 \\
3 & 200 & 240 & 1.5390 & 1.5400 & \textbf{1.4408} & 1.6382 \\
3 & 400 & 80 & 1.3325 & 1.3469 & \textbf{1.1545} & 1.4927 \\
3 & 400 & 200 & 1.0577 & 1.0602 & \textbf{0.9709} & 1.1350 \\
3 & 400 & 320 & 1.0005 & 1.0013 & \textbf{0.9480} & 1.0301 \\
3 & 400 & 480 & 0.9605 & 0.9606 & \textbf{0.9269} & 0.9528 \\
    \end{longtable}
\endgroup

\section{S\&P 500 Stock-Return Figures}
\label{sec:supp-realdata-figures}
\begin{figure}[!ht]
    \centering
    \includegraphics[page=1,width=0.75\linewidth]{sp500_loading_structure.pdf}
    \caption{Varimax-rotated stock loading matrices for the initial training sample ($n=1144$, $p=110$, $k_0=1$, $\hat r=2$). The four panels correspond to the lagged-only, direct joint, joint HeteroPCA, and contemporaneous-only HeteroPCA estimators. Rows represent stocks grouped by current sector; numbers in parentheses give sector sizes. Comm., Cons. disc., Cons. staples, and IT denote Communication, Consumer Discretionary, Consumer Staples, and Information Technology, respectively. Lighter and darker shades indicate lower and higher signed loadings, respectively; the midpoint represents zero.}
    \label{fig:sp500-training-loadings}
\end{figure}
\clearpage

\begin{figure}[!ht]
    \centering
    \includegraphics[page=2,width=0.87\linewidth]{sp500_loading_structure.pdf}
    \caption{Varimax-rotated stock loading matrices for the full sample with two positive lags ($n=1634$, $p=110$, $k_0=2$, $\hat r=3$). Sector ordering and estimator notation are as in Figure~\ref{fig:sp500-training-loadings}. The common grayscale applies within this figure; its range differs from that in the other figures. These estimates use the entire panel and are intended for structural interpretation, not out-of-sample evaluation.}
    \label{fig:sp500-full-loadings}
\end{figure}
\clearpage

\begin{figure}[!ht]
    \centering
    \includegraphics[page=3,width=\linewidth]{sp500_loading_structure.pdf}
    \caption{Full-sample sensitivity analysis with one positive lag ($n=1634$, $p=110$, $k_0=1$, $\hat r=1$). Sector ordering and estimator notation are as in Figure~\ref{fig:sp500-training-loadings}. The single loading direction requires no nontrivial varimax rotation. The selected dimension differs from the two-lag analysis; the grayscale is shared across methods within this figure only.}
    \label{fig:sp500-lag-sensitivity}
\end{figure}
\clearpage

\section{Proofs and Referenced Results}
\label{sec:appendix}

\subsection{Proof of Theorem \ref{Hetep1}}

\begin{proof}
First consider Algorithm \ref{algorithm2} run with the true rank $r$. The signal matrix $\bm M_0^*$ has rank $r$ and column space $\mathcal M(\bm A)$. If $\bm U$ is an orthonormal eigenbasis of this space, then $\bm U=\bm A\bm V$ for an orthogonal matrix $\bm V\in\bm O_{r,r}$, and $I(\bm U)=I(\bm A)$. Thus the incoherence condition in the cited result can be written using $\bm A$. On the sample perturbation event in Condition \ref{C6}, the proof of Theorem \ref{quoted:zhang-theorem7} in \cite{zhang2022heteroskedastic} gives, for $i=1,\ldots,T$,
\begin{equation}
\label{eq:proof-t1-iteration}
    \|\bm{N}^{(i)}-\bm{M}_0^*\|\leq 2\|\Gamma(\bm{N}^{(0)})-\Gamma(\bm{M}_0^*)\|+\|\bm{M}_0^*\|_{\min}(1/2)^{i+4}/\eta
\end{equation}

For comparison, write $\tau=\|\Gamma(\bm N^{(0)})-\Gamma(\bm M_0^*)\|$. If $\tau>0$ and $i$ is large enough that $\|\bm M_0^*\|_{\min}2^{-(i+4)}/\eta\leq2\tau$, the geometric remainder can be absorbed, giving
\begin{equation}
\label{eq:proof-t1-after-iterations}
\|\bm N^{(i)}-\bm M_0^*\|\leq4\tau
=4\|\Gamma(\bm N^{(0)})-\Gamma(\bm M_0^*)\|.
\end{equation}
This conditional bound is not needed when $\tau=0$; for the theorem's iteration rule we retain the geometric remainder in \eqref{eq:proof-t1-iteration} below.
\noindent We use the assumption that the noise covariance is supported on the known corruption set in the exact sense
\begin{equation*}
    \operatorname{supp}(\bm{\Sigma}_\epsilon)\subseteq\mathcal G,
    \qquad \Gamma(\bm{\Sigma}_\epsilon)=0.
\end{equation*}
Since $\bm{N}^{(0)}=\Gamma(\widehat{\bm{M}}^*)$, $\Gamma^2=\Gamma$, and $\bm{M}^*=\bm{M}_0^*+\bm{\Sigma}_\epsilon$, it follows that
\begin{equation}
\label{eq:proof-t1-projection-cancellation}
\begin{aligned}
    \Gamma(\bm{N}^{(0)})-\Gamma(\bm{M}_0^*)
    &=\Gamma(\widehat{\bm{M}}^*-\bm{M}_0^*)\\
    &=\Gamma(\widehat{\bm{M}}^*-\bm{M}^*)+\Gamma(\bm{\Sigma}_\epsilon)\\
    &=\Gamma(\widehat{\bm{M}}^*-\bm{M}^*).
\end{aligned}
\end{equation}
Under the same positive-$\tau$ remainder-absorption condition, applying Lemma 4 in \cite{zhang2022heteroskedastic} to $H=\widehat{\bm{M}}^*-\bm{M}^*$, with its rank rather than the number of factors $r$, and using $\sqrt{b\wedge\operatorname{rank}(H)}\leq\sqrt b$, we obtain
\begin{equation}
\label{eq:proof-t1-projected-error}
\begin{aligned}
    \|\bm{N}^{(i)}-\bm{M}_0^*\|
    &\leq 4\|\Gamma(\widehat{\bm{M}}^*-\bm{M}^*)\|\\
    &\leq 4(1+\sqrt b)\|\widehat{\bm{M}}^*-\bm{M}^*\|.
\end{aligned}
\end{equation}
Since $p$ is fixed and $b\leq p$, the factor $1+\sqrt b$ is bounded independently of $n$.
By the definitions of $\widehat{\bm{M}}^*$ and $\bm{M}^*$,
\begin{equation}
\label{eq:proof-t1-joint-sampling-decomposition}
    \|\widehat{\bm{M}}^*-\bm{M}^*\|\leq \sum_{k=1}^{k_0}\|\widehat{\bm{\Sigma}}_y(k)\widehat{\bm{\Sigma}}_y(k)^\top-{\bm{\Sigma}}_y(k){\bm{\Sigma}}_y(k)^\top\|+\|\widehat{\bm{\Sigma}}_y(0)-{\bm{\Sigma}}_y(0)\|
\end{equation}
We establish the sampling bounds directly from Condition \ref{C4}, with $p$ and $k_0$ fixed. Write $\bm\mu=E\bm y_t$, $\bm z_t=\bm y_t-\bm\mu$, and $q_{t,k}^{ab}=z_{t+k,a}z_{t,b}$ for $a,b\in\{1,\ldots,p\}$ and $0\leq k\leq k_0$. The fourth-moment assumption implies $E|q_{t,k}^{ab}|^2<\infty$. For $h>k$, the two products are measurable with respect to blocks separated by $h-k$ observations, so the $\psi$-mixing covariance inequality gives $|\operatorname{Cov}(q_{0,k}^{ab},q_{h,k}^{ab})|\leq C_{abk}\psi(h-k)$. The finitely many overlapping lags are controlled by Cauchy--Schwarz. Consequently,
\begin{equation*}
\begin{aligned}
\sum_{h\in\mathbb Z}
\left|\operatorname{Cov}(q_{0,k}^{ab},q_{h,k}^{ab})\right|&<\infty,\\
\operatorname{Var}\left\{\frac{1}{n-k}\sum_{t=1}^{n-k}q_{t,k}^{ab}\right\}
&=O(n^{-1}).
\end{aligned}
\end{equation*}
By Chebyshev's inequality and fixed $p$, the covariance estimator centered at the population mean satisfies
\begin{equation}
\label{eq:proof-t1-population-centered-covariance}
\widetilde{\bm\Sigma}_y(k)
=\frac{1}{n-k}\sum_{t=1}^{n-k}\bm z_{t+k}\bm z_t^\top,
\qquad
\|\widetilde{\bm\Sigma}_y(k)-\bm\Sigma_y(k)\|=O_p(n^{-1/2}).
\end{equation}
Define the full and truncated sample means by
\begin{equation}
\label{eq:proof-t1-centered-sample-means}
\begin{aligned}
\overline{\bm z}&=\frac{1}{n}\sum_{t=1}^n\bm z_t,\\
\overline{\bm z}_{+,k}&=\frac{1}{n-k}\sum_{t=1}^{n-k}\bm z_{t+k},
\qquad
\overline{\bm z}_{-,k}=\frac{1}{n-k}\sum_{t=1}^{n-k}\bm z_t.
\end{aligned}
\end{equation}
The same covariance-summability argument applied to $\bm z_t$ shows that all three means are $O_p(n^{-1/2})$. Thus replacing the population mean by the sample mean contributes only
\begin{equation}
\label{eq:proof-t1-sample-centering-error}
\begin{aligned}
\widehat{\bm\Sigma}_y(k)-\widetilde{\bm\Sigma}_y(k)
&=-\overline{\bm z}_{+,k}\overline{\bm z}^{\top}
  -\overline{\bm z}\,\overline{\bm z}_{-,k}^{\top}
  +\overline{\bm z}\,\overline{\bm z}^{\top},\\
\|\widehat{\bm\Sigma}_y(k)-\widetilde{\bm\Sigma}_y(k)\|&=O_p(n^{-1}).
\end{aligned}
\end{equation}
Define $\bm E_k=\widehat{\bm\Sigma}_y(k)-\bm\Sigma_y(k)$. Combining \eqref{eq:proof-t1-population-centered-covariance} and \eqref{eq:proof-t1-sample-centering-error}, and using fixed $k_0$, yields
\begin{equation}
\label{eq:proof-t1-uniform-covariance-rate}
\max_{0\leq k\leq k_0}\|\bm E_k\|=O_p(n^{-1/2}).
\end{equation}
For each $1\leq k\leq k_0$, expansion of the covariance product gives
\begin{equation}
\label{eq:proof-t1-lag-product-expansion}
\begin{aligned}
&\widehat{\bm\Sigma}_y(k)\widehat{\bm\Sigma}_y(k)^\top
-\bm\Sigma_y(k)\bm\Sigma_y(k)^\top\\
&\qquad=\bm E_k\bm\Sigma_y(k)^\top
+\bm\Sigma_y(k)\bm E_k^\top+\bm E_k\bm E_k^\top.
\end{aligned}
\end{equation}
Since the population covariance matrices have bounded norms in this fixed-dimensional setting,
\begin{equation*}
\begin{aligned}
&\|\widehat{\bm\Sigma}_y(k)\widehat{\bm\Sigma}_y(k)^\top
-\bm\Sigma_y(k)\bm\Sigma_y(k)^\top\|\\
&\qquad\leq 2\|\bm\Sigma_y(k)\|\,\|\bm E_k\|+\|\bm E_k\|^2
=O_p(n^{-1/2}).
\end{aligned}
\end{equation*}
Summing over the fixed number of lags gives
\begin{equation}
\label{eq:proof-t1-lag-product-rate}
\sum_{k=1}^{k_0}\|\widehat{\bm{\Sigma}}_y(k)\widehat{\bm{\Sigma}}_y(k)^\top-{\bm{\Sigma}}_y(k)\bm{\Sigma}_y(k)^\top\|=O_p(n^{-1/2}).
\end{equation}
Taking $k=0$ in \eqref{eq:proof-t1-uniform-covariance-rate} also gives
\begin{equation}
\label{eq:proof-t1-lag-zero-rate}
    \|\widehat{\bm{\Sigma}}_y(0)-{\bm{\Sigma}}_y(0)\|=O_p(n^{-1/2}).
\end{equation}
Combining \eqref{eq:proof-t1-joint-sampling-decomposition}, \eqref{eq:proof-t1-lag-product-rate}, and \eqref{eq:proof-t1-lag-zero-rate} gives $\|\widehat{\bm M}^*-\bm M^*\|=O_p(n^{-1/2})$. Lemma \ref{quoted:zhang-lemma4} then gives $\|\Gamma(\widehat{\bm M}^*-\bm M^*)\|=O_p(n^{-1/2})$, since $p$ and $b$ are fixed. Using the full iteration bound \eqref{eq:proof-t1-iteration}, the cancellation identity \eqref{eq:proof-t1-projection-cancellation}, and the iteration condition \eqref{eq:theorem1-iterations}, we obtain
\begin{equation}
\label{eq:proof-t1-imputed-matrix-rate}
\begin{aligned}
\|\bm N^{(T)}-\bm M_0^*\|
&\leq2\|\Gamma(\widehat{\bm M}^*-\bm M^*)\|
+\frac{\|\bm M_0^*\|_{\min}2^{-(T+4)}}{\eta}\\
&=O_p(n^{-1/2}).
\end{aligned}
\end{equation}
Here $\|\bm M_0^*\|_{\min}$ is a fixed population quantity. The argument remains valid when the projected perturbation is zero.

Next, we apply Lemma \ref{quoted:lam2011-lemma3} to obtain the loading-space error. Under Conditions \ref{C2} and \ref{C3}, $\mathcal M(\bm M)\subseteq\mathcal M(\bm A)$. Condition \ref{C5} gives $\operatorname{rank}(\bm M)=r$, so these spaces coincide. Define
\begin{equation*}
\begin{aligned}
D&=\bm A^\top\bm M_0^*\bm A
  =\bm A^\top\bm M\bm A+\bm\Sigma_x(0),\\
\bm M_0^*&=\bm A D\bm A^\top,\qquad
[\bm A\ \bm A^c]^\top\bm M_0^*[\bm A\ \bm A^c]
=\begin{pmatrix}D&0\\0&0\end{pmatrix}.
\end{aligned}
\end{equation*}
Since $\bm\Sigma_x(0)$ is positive semidefinite, $D\succeq\bm A^\top\bm M\bm A$. The nonzero eigenvalues of $\bm M$ are those of $\bm A^\top\bm M\bm A$. Thus, in the fixed population model,
\begin{equation}
\label{eq:proof-t1-signal-gap}
g:=\operatorname{sep}(D,0)=\lambda_{\min}(D)
=\|\bm M_0^*\|_{\min}\geq\lambda_r(\bm M)=:c_0>0.
\end{equation}
Here $D$ is an $r\times r$ signal matrix, distinct from the diagonal extraction operator $D(\cdot)$. It need not be diagonal; only the separation between its spectrum and zero is needed.

The exact support assumption and the sampling bounds above also show why the sample perturbation requirement in Condition \ref{C6} is satisfied in this fixed-dimensional setting:
\begin{equation*}
\frac{\eta\|\Gamma(\widehat{\bm M}^*)-\Gamma(\bm M_0^*)\|}
{\|\bm M_0^*\|_r}
=\frac{\eta\|\Gamma(\widehat{\bm M}^*-\bm M^*)\|}{g}
=O_p(n^{-1/2})=o_p(1).
\end{equation*}
Thus the event on which the iteration bounds were applied has probability tending to one.

Write $\bm F_n=\bm N^{(T)}-\bm M_0^*$. By \eqref{eq:proof-t1-imputed-matrix-rate} and \eqref{eq:proof-t1-signal-gap},
\begin{equation*}
\frac{\|\bm F_n\|}{g}=O_p(n^{-1/2})=o_p(1),
\qquad
\Pr\{\|\bm F_n\|\leq g/5\}\longrightarrow1.
\end{equation*}
The symmetry of $\mathcal G$ ensures that the iterates are symmetric. Apply Lemma \ref{quoted:lam2011-lemma3} with its population matrix equal to $\bm M_0^*$, perturbation equal to $\bm F_n$, and orthogonal blocks $Q_1=\bm A$, $Q_2=\bm A^c$. On the preceding event it yields a matrix $\bm P\in\mathbb R^{(p-r)\times r}$ such that
\begin{equation*}
\|\bm P\|
\leq\frac4g\|(\bm A^c)^\top\bm F_n\bm A\|
\leq\frac4g\|\bm F_n\|
=O_p(n^{-1/2}),
\end{equation*}
and the columns of
\begin{equation*}
\widetilde{\bm A}
=(\bm A+\bm A^c\bm P)(\bm I_r+\bm P^\top\bm P)^{-1/2}
\end{equation*}
form an orthonormal basis of an invariant subspace of $\bm N^{(T)}$.
By the spectral separation and Weyl's inequality, with probability tending to one this is the leading $r$-dimensional eigenspace. Hence, when the algorithm uses rank $r$, there exists $\bm R_n\in\bm O_{r,r}$ such that $\widetilde{\bm A}=\widehat{\bm A}_{GH}\bm R_n$.

Let $\bm C_n=(\bm I_r+\bm P^\top\bm P)^{-1/2}$. Then $\|\bm C_n\|\leq1$ and $\|\bm C_n-\bm I_r\|\leq\|\bm P\|^2/2$. Since $\|\bm P\|=o_p(1)$, with probability tending to one,
\begin{equation}
\label{eq:proof-t1-loading-rate}
\begin{aligned}
\min_{\bm R\in\bm O_{r,r}}
\|\widehat{\bm A}_{GH}\bm R-\bm A\|
&\leq\|\widetilde{\bm A}-\bm A\|\\
&\leq\|\bm C_n-\bm I_r\|+\|\bm P\|\,\|\bm C_n\|\\
&\leq\tfrac12\|\bm P\|^2+\|\bm P\|
\leq2\|\bm P\|=O_p(n^{-1/2}).
\end{aligned}
\end{equation}
This is a bound for an orthogonally aligned basis, not necessarily for the raw eigenvector matrix. It also implies the rotation-invariant bound
\begin{equation*}
\begin{aligned}
\|\widehat{\bm A}_{GH}\widehat{\bm A}_{GH}^\top-\bm A\bm A^\top\|
&=\|\widetilde{\bm A}\widetilde{\bm A}^\top-\bm A\bm A^\top\|\\
&\leq2\|\widetilde{\bm A}-\bm A\|=O_p(n^{-1/2}).
\end{aligned}
\end{equation*}
Finally, if $\Pr(\widehat r=r)\to1$, the estimated-rank algorithm agrees with its true-rank counterpart on this event (using the same iteration count and eigenvector-selection rule). For every $\varepsilon>0$, the probability of an incorrect rank is eventually less than $\varepsilon$, so the true-rank $O_p(n^{-1/2})$ projector bound transfers to the estimated-rank output. This proves the stated extension without comparing matrices with different numbers of columns.
\end{proof}

\subsection{Proof of Theorem \ref{Hetep2}}
\begin{proof}
The deterministic argument leading to \eqref{eq:proof-t1-projected-error} does not require fixed $p$. Under the stated rank and support assumptions, on the event in Condition \ref{C6} and for iteration counts satisfying the remainder-absorption condition used in \eqref{eq:proof-t1-after-iterations}, the same argument yields
\begin{equation}
\label{eq:proof-t2-projected-error}
\|\bm N^{(i)}-\bm M_0^*\|
\leq4\|\Gamma(\widehat{\bm M}^*-\bm M^*)\|.
\end{equation}
If the projected perturbation is zero, retain the geometric remainder in \eqref{eq:proof-t1-iteration} instead.

For comparison, Lemma \ref{quoted:zhang-lemma4}, applied with the rank of $\widehat{\bm M}^*-\bm M^*$ rather than the factor number $r$, gives the weaker bound, under the same remainder-absorption condition,
\begin{equation*}
\|\bm N^{(i)}-\bm M_0^*\|
\leq4(1+\sqrt b)\|\widehat{\bm M}^*-\bm M^*\|.
\end{equation*}
Unlike the fixed-dimensional case, $b$ may increase with $p$, so $1+\sqrt b$ cannot be absorbed into a constant without an additional assumption. We retain \eqref{eq:proof-t2-projected-error} as the primary bound for the subsequent high-dimensional analysis.

Under Condition \ref{C9}, Lemma \ref{Mdiff} bounds the sampling error around $\bm M^*$, not around $\bm M_0^*$. Since $\Gamma$ retains only entries outside $\mathcal G$, for every matrix $\bm H$,
\begin{equation*}
\|\Gamma(\bm H)\|\leq\|\Gamma(\bm H)\|_F\leq\|\bm H\|_F.
\end{equation*}
To use the iteration condition in Theorem \ref{Hetep2}, retain the geometric remainder in \eqref{eq:proof-t1-iteration}. Write $g_p=\|\bm M_0^*\|_{\min}$. Lemma \ref{Mmin} gives $g_p\asymp q_p$, so $2^{-T}/\eta=O(\rho_{n,p})$ implies that this remainder is $O(q_p\rho_{n,p})=O(p^{2-\delta}n^{-1/2})$. On the event in Condition \ref{C6}, Lemma \ref{Mdiff} consequently gives
\begin{equation*}
\begin{aligned}
\|\bm N^{(T)}-\bm M_0^*\|
&\leq2\|\Gamma(\widehat{\bm M}^*-\bm M^*)\|
+\frac{g_p2^{-(T+4)}}{\eta}\\
&\leq2\|\widehat{\bm M}^*-\bm M^*\|_F
+\frac{g_p2^{-(T+4)}}{\eta}\\
&=O_p\left(p^{2-\delta}n^{-1/2}+p^2n^{-1}\right).
\end{aligned}
\end{equation*}
The event in Condition \ref{C6} has probability tending to one. Since $p^\delta n^{-1/2}=O(1)$, the quadratic term is absorbed into the first term, yielding
\begin{equation}
\label{eq:proof-t2-imputed-simplified-rate}
\|\bm N^{(T)}-\bm M_0^*\|=O_p(p^{2-\delta}n^{-1/2}).
\end{equation}
No boundedness assumption on $b$ is used in this norm comparison, and the argument also applies when the projected perturbation is zero. The structural requirements of Condition \ref{C6} remain in force.

Next, define $D=\bm A^\top\bm M_0^*\bm A$ as in Lemma \ref{Mmin}. That lemma gives $\bm M_0^*=\bm A D\bm A^\top$ with $D$ positive definite. Therefore,
\begin{equation}
\label{eq:proof-t2-signal-gap}
g_p=\operatorname{sep}(D,0)=\lambda_{\min}(D)
=\|\bm M_0^*\|_{\min}
\geq c\,q_p
=c\max\{p^{2-2\delta},p^{1-\delta_0}\}.
\end{equation}
Let $\bm F_{n,p}=\bm N^{(T)}-\bm M_0^*$. Combining \eqref{eq:proof-t2-imputed-simplified-rate} with \eqref{eq:proof-t2-signal-gap} and the growth assumption $\rho_{n,p}\to0$ yields
\begin{equation*}
\frac{\|\bm F_{n,p}\|}{g_p}=O_p(\rho_{n,p})=o_p(1),
\qquad
\Pr\{\|\bm F_{n,p}\|\leq g_p/5\}\longrightarrow1.
\end{equation*}
The symmetry of $\mathcal G$ ensures that the iterates are symmetric. On this event, apply Lemma \ref{quoted:lam2011-lemma3} with population matrix $\bm M_0^*$, perturbation $\bm F_{n,p}$, and orthogonal blocks $\bm A,\bm A^c$. Exactly as in Theorem \ref{Hetep1}, there is a matrix $\bm P\in\mathbb R^{(p-r)\times r}$ satisfying
\begin{equation*}
\|\bm P\|
\leq\frac4{g_p}\|(\bm A^c)^\top\bm F_{n,p}\bm A\|
\leq\frac4{g_p}\|\bm F_{n,p}\|
=O_p(\rho_{n,p}),
\end{equation*}
and the corresponding invariant subspace has an orthonormal basis
\begin{equation*}
\widetilde{\bm A}
=(\bm A+\bm A^c\bm P)(\bm I_r+\bm P^\top\bm P)^{-1/2}.
\end{equation*}
By Weyl's inequality and separation from the zero eigenvalues, this is the leading $r$-dimensional eigenspace of $\bm N^{(T)}$. Hence $\widetilde{\bm A}=\widehat{\bm A}_{GH}\bm R_{n,p}$ for some $\bm R_{n,p}\in\bm O_{r,r}$. Using the same normalization bound as in \eqref{eq:proof-t1-loading-rate}, with probability tending to one,
\begin{equation*}
\begin{aligned}
\min_{\bm R\in\bm O_{r,r}}\|\widehat{\bm A}_{GH}\bm R-\bm A\|
&\leq\|\widetilde{\bm A}-\bm A\|\\
&\leq\|\bm P\|+\tfrac12\|\bm P\|^2
\leq2\|\bm P\|=O_p(\rho_{n,p}).
\end{aligned}
\end{equation*}
The rotation-invariant error obeys
\begin{equation*}
\|\widehat{\bm A}_{GH}\widehat{\bm A}_{GH}^\top-\bm A\bm A^\top\|
\leq2\|\widetilde{\bm A}-\bm A\|=O_p(\rho_{n,p}).
\end{equation*}
Finally, the deterministic rate simplifies exactly to
\begin{equation*}
\begin{aligned}
\rho_{n,p}
&=\frac{p^{2-\delta}n^{-1/2}}
{\max\{p^{2-2\delta},p^{1-\delta_0}\}}\\
&=\min\{p^\delta n^{-1/2},p^{1+\delta_0-\delta}n^{-1/2}\}.
\end{aligned}
\end{equation*}
This proves \eqref{eq:theorem2-aligned-and-projector-rate}. The second expression is strictly smaller in its power of $p$ when $2\delta>1+\delta_0$; otherwise the first expression determines the bound. Without the quadratic-term absorption condition, the sampling calculation retains the numerator $p^{2-\delta}n^{-1/2}+p^2/n$, so the displayed two-rate conclusion cannot be obtained by simply dropping $p^2/n$.
\end{proof}

\begin{lemma}
\label{Mmin}
Under Conditions \ref{C2}, \ref{C3}, \ref{C7}, and \ref{C8}, with fixed $k_0\geq1$, there are constants $c,C>0$, independent of $p$, such that for all sufficiently large $p$,
\begin{equation}
\label{eq:lemma1-signal-eigenvalue-bounds}
c\max\{p^{2-2\delta},p^{1-\delta_0}\}
\leq\|\bm M_0^*\|_{\min}
\leq\|\bm M_0^*\|
\leq C\max\{p^{2-2\delta},p^{1-\delta_0}\}.
\end{equation}
\end{lemma}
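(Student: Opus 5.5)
We reduce everything to the $r\times r$ matrix $D=\bm A^\top\bm M_0^*\bm A$ and bound the two pieces of $\bm M_0^*$ separately. Under Conditions \ref{C2} and \ref{C3}, for each $k\geq1$ we have $\bm\Sigma_y(k)=\bm A\bm B_k$ with $\bm B_k=\bm\Sigma_x(k)\bm A^\top+\bm\Sigma_{x\epsilon}(k)$. Hence
\[
\bm M_0^*=\bm A D\bm A^\top,\qquad
D=\bm\Sigma_x(0)+\sum_{k=1}^{k_0}\bm B_k\bm B_k^\top .
\]
Because $\bm A^\top\bm A=\bm I_r$, the nonzero singular values of $\bm M_0^*$ are exactly the eigenvalues of $D$, which is positive semidefinite. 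So it suffices to show
\[
\lambda_{\min}(D)\geq c\max\{p^{2-2\delta},p^{1-\delta_0}\}
\quad\text{and}\quad
\lambda_{\max}(D)\leq C\max\{p^{2-2\delta},p^{1-\delta_0}\}.
\]
Once $\lambda_{\min}(D)>0$ is established, $D$ is positive definite and $\|\bm M_0^*\|_{\min}=\lambda_{\min}(D)$, which is the form used in the proof of Theorem \ref{Hetep2}.

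For the upper bound, Condition \ref{C7} gives $\|\bm\Sigma_x(0)\|\leq Cp^{1-\delta_0}$. Conditions \ref{C7} and \ref{C8}, together with $\|\bm A\|=1$, give $\|\bm B_k\|\leq\|\bm\Sigma_x(k)\|+\|\bm\Sigma_{x\epsilon}(k)\|\leq Cp^{1-\delta}$. Therefore $\|D\|\leq Cp^{1-\delta_0}+k_0C^2p^{2-2\delta}$, and since $k_0$ is fixed this is at most $C'\max\{\cdot\}$.

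For the lower bound, both summands of $D$ are positive semidefinite, so $\lambda_{\min}(D)\geq\max\{\lambda_{\min}(\bm\Sigma_x(0)),\lambda_{\min}(\bm B_1\bm B_1^\top)\}$. The first term is at least $cp^{1-\delta_0}$ by Condition \ref{C7}. For the second, $\lambda_{\min}(\bm B_1\bm B_1^\top)=\sigma_r(\bm B_1)^2$, since $\bm B_1$ is $r\times p$ with $r\leq p$. Weyl's inequality for singular values gives
\[
\sigma_r(\bm B_1)\geq\sigma_r(\bm\Sigma_x(1)\bm A^\top)-\|\bm\Sigma_{x\epsilon}(1)\|.
\]
Since $\bm A^\top$ has orthonormal rows, $\sigma_r(\bm\Sigma_x(1)\bm A^\top)=\sigma_r(\bm\Sigma_x(1))\geq cp^{1-\delta}$. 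The subtracted term is $o(p^{1-\delta})$ by Condition \ref{C8}. So for all sufficiently large $p$, $\sigma_r(\bm B_1)\geq (c/2)p^{1-\delta}$, and squaring gives $\lambda_{\min}(D)\geq (c^2/4)p^{2-2\delta}$. Combining the two lower bounds with $\max\{a,b\}$ gives the claim.

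The only delicate step is the lower bound through the lagged term: it needs the singular-value form of Weyl's inequality and the fact that right multiplication by $\bm A^\top$ preserves singular values. This is also where "sufficiently large $p$" enters, because the $o(p^{1-\delta})$ cross term must be absorbed. Everything else is norm bookkeeping with constants independent of $p$.
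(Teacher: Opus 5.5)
Your proof is correct and follows the paper's argument essentially step for step. You use the same factorization $\bm M_0^*=\bm A D\bm A^\top$ with $D=\bm\Sigma_x(0)+\sum_k\bm B_k\bm B_k^\top$, the same singular-value perturbation bound $\sigma_r(\bm B_k)\geq\sigma_r\{\bm\Sigma_x(k)\}-\|\bm\Sigma_{x\epsilon}(k)\|$, and the same norm bookkeeping for the upper bound. The only cosmetic difference is in the lower bound: you bound $\lambda_{\min}(D)$ by the larger of the $\bm\Sigma_x(0)$ contribution and the single lag-one contribution, whereas the paper sums the contributions over all lags; either suffices.
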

\begin{proof}
For $k=1,\ldots,k_0$, define the $r\times p$ matrices
\begin{equation*}
\bm B_k=\bm\Sigma_x(k)\bm A^\top+\bm\Sigma_{x\epsilon}(k),
\qquad
\bm\Sigma_y(k)=\bm A\bm B_k.
\end{equation*}
Under Conditions \ref{C2} and \ref{C3}, the contemporaneous factor--noise cross-covariance is zero. Consequently,
\begin{equation}
\label{eq:lemma1-signal-decomposition}
\begin{aligned}
D&=\bm\Sigma_x(0)+\sum_{k=1}^{k_0}\bm B_k\bm B_k^\top,\\
\bm M_0^*&=\bm A D\bm A^\top,\qquad
D=\bm A^\top\bm M_0^*\bm A.
\end{aligned}
\end{equation}
Equivalently, horizontal concatenation gives
\begin{equation*}
\bm L=[\bm\Sigma_x^{1/2}(0)\bm A^\top,\bm B_1,\ldots,\bm B_{k_0}],
\qquad D=\bm L\bm L^\top.
\end{equation*}
This is not the product of sums of the blocks.

Since $\bm A^\top\bm A=\bm I_r$, the $r$ singular values of $\bm\Sigma_x(k)\bm A^\top$ equal those of $\bm\Sigma_x(k)$. By the singular-value perturbation inequality and Conditions \ref{C7}--\ref{C8}, there are constants $c_1,C_1>0$ such that, for all sufficiently large $p$ and each $k=1,\ldots,k_0$,
\begin{equation*}
\begin{aligned}
\sigma_r(\bm B_k)
&\geq\sigma_r\{\bm\Sigma_x(k)\}
-\|\bm\Sigma_{x\epsilon}(k)\|
\geq c_1p^{1-\delta},\\
\|\bm B_k\|
&\leq\|\bm\Sigma_x(k)\|+\|\bm\Sigma_{x\epsilon}(k)\|
\leq C_1p^{1-\delta}.
\end{aligned}
\end{equation*}
The lower bound is positive before it is squared. Each summand in \eqref{eq:lemma1-signal-decomposition} is positive semidefinite, so
\begin{equation*}
\begin{aligned}
\lambda_{\min}(D)
&\geq\lambda_{\min}\{\bm\Sigma_x(0)\}
+\sum_{k=1}^{k_0}\sigma_r(\bm B_k)^2\\
&\geq c_0p^{1-\delta_0}+k_0c_1^2p^{2-2\delta}\\
&\geq c\max\{p^{2-2\delta},p^{1-\delta_0}\}.
\end{aligned}
\end{equation*}
Similarly, using fixed $k_0$,
\begin{equation*}
\begin{aligned}
\|D\|
&\leq\|\bm\Sigma_x(0)\|+\sum_{k=1}^{k_0}\|\bm B_k\|^2\\
&\leq C\max\{p^{2-2\delta},p^{1-\delta_0}\}.
\end{aligned}
\end{equation*}
Thus $D$ is positive definite, $\bm M_0^*$ has rank $r$, and its nonzero eigenvalues are exactly those of $D$. This proves \eqref{eq:lemma1-signal-eigenvalue-bounds}.
\end{proof}

\begin{lemma}
\label{Mdiff}
Under Conditions \ref{C2}, \ref{C3}, \ref{C4}, and \ref{C7}--\ref{C9}, let $r$ and $k_0$ be fixed as $n,p\to\infty$. For clarity, Condition \ref{C9} states that, with $\bm z_t=\bm y_t-E\bm y_t$ and mixing coefficient $\psi_p(h)$ at dimension $p$,
\begin{equation}
\label{eq:lemma2-uniform-moment-mixing}
\sup_p\max_{1\leq a\leq p}E|z_{t,a}|^4\leq C,
\qquad
\sup_p\sum_{h=1}^{\infty}h\,\psi_p(h)^{1/2}\leq C.
\end{equation}
Then
\begin{equation*}
\|\widehat{\bm M}^*-\bm M^*\|_F
=O_p\left(p^{2-\delta}n^{-1/2}+p^2n^{-1}\right).
\end{equation*}
In particular, if $p^\delta n^{-1/2}=O(1)$, then
\begin{equation}
\label{eq:lemma2-joint-simplified-rate}
\|\widehat{\bm M}^*-\bm M^*\|
\leq\|\widehat{\bm M}^*-\bm M^*\|_F
=O_p(p^{2-\delta}n^{-1/2}).
\end{equation}
\end{lemma}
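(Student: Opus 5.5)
The plan is to split $\widehat{\bm M}^*-\bm M^*$ into the lag-zero piece $\bm E_0=\widehat{\bm\Sigma}_y(0)-\bm\Sigma_y(0)$ and, for each $1\le k\le k_0$, the product piece from \eqref{eq:proof-t1-lag-product-expansion}:
\begin{equation*}
\widehat{\bm\Sigma}_y(k)\widehat{\bm\Sigma}_y(k)^\top-\bm\Sigma_y(k)\bm\Sigma_y(k)^\top
=\bm E_k\bm\Sigma_y(k)^\top+\bm\Sigma_y(k)\bm E_k^\top+\bm E_k\bm E_k^\top .
\end{equation*}
Everything then reduces to two ingredients: a Frobenius-norm rate for each $\bm E_k$, $0\le k\le k_0$, and an operator-norm bound on $\bm\Sigma_y(k)$. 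Since $k_0$ is fixed, the triangle inequality over lags costs only a constant.

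\textbf{Entrywise step.} Repeat the argument from the proof of Theorem~\ref{Hetep1}, but with constants uniform in $p$. For each pair $(a,b)$ and lag $k$, the product $q^{ab}_{t,k}=z_{t+k,a}z_{t,b}$ has second moment bounded by $C$, via Cauchy--Schwarz and the uniform fourth moments in Condition~\ref{C9}. The $\psi$-mixing covariance inequality gives $|\operatorname{Cov}(q^{ab}_{0,k},q^{ab}_{h,k})|\le C'\psi_p(h-k)^{1/2}$ with $C'$ free of $a,b,p$. The uniform summability in Condition~\ref{C9} then yields
\begin{equation*}
E\bigl(\widetilde{\bm\Sigma}_y(k)_{ab}-\bm\Sigma_y(k)_{ab}\bigr)^2\le C''/n
\end{equation*}
uniformly over $a,b$. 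Summing the $p^2$ entries gives $E\|\widetilde{\bm\Sigma}_y(k)-\bm\Sigma_y(k)\|_F^2\le C''p^2/n$, so this piece is $O_p(p\,n^{-1/2})$ by Markov's inequality. The same argument gives $E\|\overline{\bm z}\|^2\le Cp/n$, and likewise for the truncated means. Hence the centering correction in \eqref{eq:proof-t1-sample-centering-error} is $O_p(p/n)$ in Frobenius norm, which is dominated. Therefore $\max_{k\le k_0}\|\bm E_k\|_F=O_p(pn^{-1/2})$, and in particular the lag-zero piece is $O_p(pn^{-1/2})$.

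\textbf{Signal step.} Write $\bm\Sigma_y(k)=\bm A\bm B_k$ as in the proof of Lemma~\ref{Mmin}. Conditions~\ref{C7}--\ref{C8} and $\bm A^\top\bm A=\bm I_r$ give $\|\bm\Sigma_y(k)\|\le\|\bm B_k\|\le C_1p^{1-\delta}$. Using $\|\bm X\bm Y\|_F\le\|\bm X\|_F\|\bm Y\|$, each cross term satisfies $\|\bm E_k\bm\Sigma_y(k)^\top\|_F=O_p(p^{2-\delta}n^{-1/2})$. The quadratic term satisfies $\|\bm E_k\bm E_k^\top\|_F\le\|\bm E_k\|_F^2=O_p(p^2n^{-1})$. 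The lag-zero contribution $pn^{-1/2}$ is at most $p^{2-\delta}n^{-1/2}$ because $\delta\le1$. Summing over the fixed set of lags gives the main display. For \eqref{eq:lemma2-joint-simplified-rate}, use that $p^2n^{-1}=p^{2-\delta}n^{-1/2}\cdot p^{\delta}n^{-1/2}$, so the quadratic term is absorbed when $p^\delta n^{-1/2}=O(1)$; also $\|\cdot\|\le\|\cdot\|_F$.

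\textbf{Main obstacle.} The delicate step is the uniform-in-$p$ covariance bound for the products $q^{ab}_{t,k}$, and in particular for the overlapping lags $h\le k$ and the exact form of the $\psi$-mixing inequality. I would use the standard bound $|\operatorname{Cov}(X,Y)|\le 2\psi(m)^{1/2}\|X\|_2\|Y\|_2$ for measurable functions of blocks separated by $m$. This requires noting that $q_{0,k}$ depends on times $\{0,k\}$ and $q_{h,k}$ on times $\{h,h+k\}$, so the separation is $h-k$. The finitely many remaining lags are handled by Cauchy--Schwarz with the constant $C$. No bound on $b$, $\eta$, or the incoherence is needed here, because the lemma concerns only the sampling error around $\bm M^*$.
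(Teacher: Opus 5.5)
Your proposal is correct and follows essentially the same route as the paper's proof: the same decomposition $\bm E_0+\sum_k(\bm E_k\bm S_k^\top+\bm S_k\bm E_k^\top+\bm E_k\bm E_k^\top)$ bounded via $\|\bm U\bm V\|_F\le\|\bm U\|_F\|\bm V\|$, the same uniform entrywise variance bound summed over the $p^2$ entries to get $\max_k\|\bm E_k\|_F=O_p(pn^{-1/2})$ with an $O_p(p/n)$ centering correction, and $\|\bm\Sigma_y(k)\|=O(p^{1-\delta})$ from Conditions~\ref{C7}--\ref{C8}. The only cosmetic difference is that you use the $\psi(\cdot)^{1/2}\|X\|_2\|Y\|_2$ form of the mixing covariance inequality, whereas the paper writes $C_1\psi_p(h-k)$; both bounds are uniformly summable under Condition~\ref{C9}.
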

\begin{proof}
The sampling error must be centered at $\bm M^*$, since
\begin{equation*}
\widehat{\bm M}^*-\bm M_0^*
=(\widehat{\bm M}^*-\bm M^*)+\bm\Sigma_\epsilon.
\end{equation*}
Put $\bm S_k=\bm\Sigma_y(k)$ and $\bm E_k=\widehat{\bm\Sigma}_y(k)-\bm\Sigma_y(k)$. The same product expansion as in \eqref{eq:proof-t1-lag-product-expansion} gives
\begin{equation*}
\widehat{\bm M}^*-\bm M^*
=\bm E_0+\sum_{k=1}^{k_0}
\left(\bm E_k\bm S_k^\top+\bm S_k\bm E_k^\top
+\bm E_k\bm E_k^\top\right).
\end{equation*}
Using $\|\bm U\bm V\|_F\leq\|\bm U\|_F\|\bm V\|$ and $\|\bm E_k\|\leq\|\bm E_k\|_F$, we obtain
\begin{equation}
\label{eq:lemma2-joint-error-bound}
\|\widehat{\bm M}^*-\bm M^*\|_F
\leq\|\bm E_0\|_F+
\sum_{k=1}^{k_0}
\left(2\|\bm S_k\|\,\|\bm E_k\|_F+\|\bm E_k\|_F^2\right).
\end{equation}

Firstly, for $k=1,\ldots,k_0$, Conditions \ref{C2}, \ref{C3}, \ref{C7}, and \ref{C8} imply
\begin{equation}
\label{eq:lemma2-population-lag-bound}
\begin{aligned}
\|\bm S_k\|
&=\|\bm A\bm\Sigma_x(k)\bm A^\top
+\bm A\bm\Sigma_{x\epsilon}(k)\|\\
&\leq\|\bm\Sigma_x(k)\|+\|\bm\Sigma_{x\epsilon}(k)\|
=O(p^{1-\delta}).
\end{aligned}
\end{equation}

Secondly, we repeat the sampling argument of Theorem \ref{Hetep1}, now keeping the dimension dependence explicit. Let $q_{t,k}^{ab}=z_{t+k,a}z_{t,b}$. By \eqref{eq:lemma2-uniform-moment-mixing} and Cauchy--Schwarz, $E|q_{t,k}^{ab}|^2$ is bounded uniformly in $p,a,b$ and $0\leq k\leq k_0$. For $h>k$, the $\psi$-mixing covariance inequality bounds the covariance of two such products by $C_1\psi_p(h-k)$, where $C_1$ does not depend on $p,a,b,k$. The finitely many overlapping lags have a uniform bound as well. Consequently, for constants independent of the dimension and coordinates,
\begin{equation*}
\begin{aligned}
\sup_p\max_{\substack{1\leq a,b\leq p\\0\leq k\leq k_0}}
\sum_{h\in\mathbb Z}
|\operatorname{Cov}(q_{0,k}^{ab},q_{h,k}^{ab})|
&\leq C_2,\\
\operatorname{Var}\left\{\frac{1}{n-k}
\sum_{t=1}^{n-k}q_{t,k}^{ab}\right\}
&\leq \frac{C_3}{n}.
\end{aligned}
\end{equation*}
Here and below these bounds hold for all sufficiently large $n$. For the population-centered estimator $\widetilde{\bm\Sigma}_y(k)$ defined in \eqref{eq:proof-t1-population-centered-covariance}, summing the entrywise variances and applying Markov's inequality gives
\begin{equation*}
\begin{aligned}
E\|\widetilde{\bm\Sigma}_y(k)-\bm\Sigma_y(k)\|_F^2
&\leq C_3p^2/n,\\
\max_{0\leq k\leq k_0}
\|\widetilde{\bm\Sigma}_y(k)-\bm\Sigma_y(k)\|_F
&=O_p(pn^{-1/2}).
\end{aligned}
\end{equation*}
The maximum is over a fixed number of lags; it is not an entrywise maximum.

For the sample means in \eqref{eq:proof-t1-centered-sample-means}, the same uniform covariance-summability argument, applied to $z_{t,a}$, yields
\begin{equation*}
\|\overline{\bm z}\|
+\max_{0\leq k\leq k_0}
\left(\|\overline{\bm z}_{+,k}\|+\|\overline{\bm z}_{-,k}\|\right)
=O_p\left(\sqrt{p/n}\right).
\end{equation*}
Using the exact centering identity in \eqref{eq:proof-t1-sample-centering-error} and the Frobenius norm of a rank-one matrix, we have
\begin{equation*}
\begin{aligned}
\|\widehat{\bm\Sigma}_y(k)-\widetilde{\bm\Sigma}_y(k)\|_F
&\leq\|\overline{\bm z}_{+,k}\|\,\|\overline{\bm z}\|
+\|\overline{\bm z}\|\,\|\overline{\bm z}_{-,k}\|
+\|\overline{\bm z}\|^2\\
&=O_p(p/n).
\end{aligned}
\end{equation*}
Therefore,
\begin{equation}
\label{eq:lemma2-uniform-covariance-frobenius-rate}
\max_{0\leq k\leq k_0}\|\bm E_k\|_F=O_p(pn^{-1/2}).
\end{equation}
This treats lag zero as a centered sampling error, without dropping $\bm\Sigma_\epsilon$ or imposing separate mixing conditions on the latent factors and noise.

Finally, substituting \eqref{eq:lemma2-population-lag-bound} and \eqref{eq:lemma2-uniform-covariance-frobenius-rate} into \eqref{eq:lemma2-joint-error-bound}, and using fixed $k_0$ and $0\leq\delta\leq1$, gives
\begin{equation*}
\begin{aligned}
\|\widehat{\bm M}^*-\bm M^*\|_F
&=O_p\left(pn^{-1/2}+p^{2-\delta}n^{-1/2}+p^2n^{-1}\right)\\
&=O_p\left(p^{2-\delta}n^{-1/2}+p^2n^{-1}\right).
\end{aligned}
\end{equation*}
Since $(p^2n^{-1})/(p^{2-\delta}n^{-1/2})=p^\delta n^{-1/2}$, the additional growth restriction gives \eqref{eq:lemma2-joint-simplified-rate}.
\end{proof}

\subsection{Proof of Corollary \ref{Hetep3}}
\begin{proof}
Condition \ref{C1} implies Condition \ref{C3} and ensures that $\bm\Sigma_\epsilon$ is supported on the diagonal set $\mathcal G=\{(i,i):1\leq i\leq p\}$. This set is symmetric and satisfies $b=\eta=1$ by \eqref{eq:diagonal-mask-parameters}. Condition \ref{C10} is therefore the population incoherence part of Condition \ref{C6}. By Lemmas \ref{Mmin} and \ref{Mdiff}, and the growth assumptions,
\begin{equation*}
\begin{aligned}
\frac{\eta\|\Gamma(\widehat{\bm M}^*)-\Gamma(\bm M_0^*)\|}
{\|\bm M_0^*\|_r}
&=\frac{\|\Delta(\widehat{\bm M}^*-\bm M^*)\|}
{\|\bm M_0^*\|_r}\\
&\leq\frac{2\|\widehat{\bm M}^*-\bm M^*\|}
{\|\bm M_0^*\|_r}
=O_p(\rho_{n,p})=o_p(1).
\end{aligned}
\end{equation*}
Thus the sample perturbation part of Condition \ref{C6} holds with probability tending to one. All assumptions of Theorem \ref{Hetep2} are satisfied, and the two algorithms coincide for this corruption set. Applying that theorem gives \eqref{eq:corollary-diagonal-subspace-rate}.
\end{proof}

\subsection{Referenced Theorems and Lemmas}
\label{sec:referenced-results}

\setcounter{quotedtheorem}{0}
\begin{quotedtheorem}[Bathia, Yao and Ziegelmann (2010), Theorem 1]
\label{quoted:bathia2010-theorem1}
Let conditions C1--C4 hold. Then as $n\to\infty$, the following assertions hold:
\begin{enumerate}
\renewcommand{\labelenumi}{(\roman{enumi})}
\item $\|\widehat K-K\|_{\mathcal S}=O_p(n^{-1/2})$.
\item For $j=1,\ldots,d$, $|\widehat\theta_j-\theta_j|=O_p(n^{-1/2})$ and
\begin{equation*}
\left(\int_{\mathcal I}\{\widehat\psi_j(u)-\psi_j(u)\}^2\,du\right)^{1/2}
=O_P(n^{-1/2}).
\end{equation*}
\item For $j\geq d+1$, $\widehat\theta_j=O_p(n^{-1})$.
\item Let $\{\psi_j:j\geq d+1\}$ be a complete orthonormal basis of $\mathcal M^\perp$, and put
\begin{equation*}
f_j(\cdot)=\sum_{i=d+1}^{\infty}
\langle\psi_i,\widehat\psi_j\rangle\psi_i(\cdot).
\end{equation*}
Then for any $j\geq d+1$,
\begin{equation*}
\begin{aligned}
&\left(\int_{\mathcal I}
\left\{\sum_{i=1}^{d}\langle\psi_i,\widehat\psi_j\rangle\psi_i(u)\right\}^2\,du
\right)^{1/2}\\
&\qquad=\left(\int_{\mathcal I}\{\widehat\psi_j(u)-f_j(u)\}^2\,du\right)^{1/2}
=O_p(n^{-1/2}).
\end{aligned}
\end{equation*}
\end{enumerate}
\end{quotedtheorem}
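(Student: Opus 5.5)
The plan follows the structure of the proof of Theorem~\ref{Hetep1}, transposed from $p\times p$ matrices to integral operators on $L^2(\mathcal I)$. Throughout, write $M_k(u,v)=\operatorname{Cov}\{Y_{t+k}(u),Y_t(v)\}$ for the lag-$k$ autocovariance kernel and $\widehat M_k$ for its sample analogue. Then
\[
K(u,v)=\sum_{k=1}^{k_0}\int_{\mathcal I}M_k(u,z)M_k(v,z)\,dz ,
\]
and $\widehat K$ is defined in the same way from the $\widehat M_k$. Under C1--C4 of Bathia, Yao and Ziegelmann, $K$ is a nonnegative, self-adjoint Hilbert--Schmidt operator. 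It has exactly $d$ nonzero eigenvalues $\theta_1>\cdots>\theta_d>0$, and their eigenfunctions $\psi_1,\ldots,\psi_d$ span the finite-dimensional space $\mathcal M$. The key structural fact is the following: for every $z$, the function $u\mapsto M_k(u,z)$ lies in $\mathcal M$, because the curve decomposes into a finite-dimensional dynamic part plus white noise with no lagged covariance. Consequently,
\[
\int_{\mathcal I}M_k(u,z)f(u)\,du=0 \qquad\text{for all } f\perp\mathcal M .
\]

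\textbf{Step (i).} I will first show $\|\widehat M_k-M_k\|_{\mathcal S}=O_p(n^{-1/2})$ for each $k\le k_0$.
\begin{itemize}
\item Compute $E\|\widehat M_k-M_k\|_{\mathcal S}^2=\int\!\!\int E\{\widehat M_k(u,v)-M_k(u,v)\}^2\,du\,dv$.
\item Bound the integrand pointwise by $C/n$, exactly as in the proof of Theorem~\ref{Hetep1}. Second moments of the products $Y_{t+k}(u)Y_t(v)$ are finite by the fourth-moment condition. Their autocovariances are summable by the $\psi$-mixing covariance inequality together with $\sum t\psi(t)^{1/2}<\infty$. The bound is integrable by Fubini, and the sample-mean centering adds only $O_p(n^{-1})$.
\item Chebyshev then gives the rate.
\item Expanding $\widehat M_k\widehat M_k^*-M_kM_k^*=E_kM_k^*+M_kE_k^*+E_kE_k^*$, where $E_k=\widehat M_k-M_k$, and summing over the finitely many lags gives (i).
\end{itemize}

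\textbf{Step (ii).} For the eigenvalues, I will use the operator Weyl inequality $|\widehat\theta_j-\theta_j|\le\|\widehat K-K\|\le\|\widehat K-K\|_{\mathcal S}$. For the eigenfunctions, I will invoke the standard perturbation bound for isolated simple eigenvalues (Bosq's Lemma 4.3):
\[
\|\widehat\psi_j-\psi_j\|\le \frac{2\sqrt2}{\gamma_j}\,\|\widehat K-K\| ,
\]
with the sign of $\widehat\psi_j$ chosen so that $\langle\widehat\psi_j,\psi_j\rangle\ge0$. The spectral gap $\gamma_j=\min_{i\ne j}|\theta_i-\theta_j|$ is a fixed positive constant, because the nonzero eigenvalues are distinct and $\theta_{d+1}=0$. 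This gives (ii).

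\textbf{Step (iii).} This step gives the faster $O_p(n^{-1})$ rate and carries the main new idea. By Courant--Fischer,
\[
\widehat\theta_{d+1}\le\sup_{f\perp\mathcal M,\ \|f\|=1}\langle\widehat Kf,f\rangle=\sup_{f\perp\mathcal M,\ \|f\|=1}\sum_{k=1}^{k_0}\|\widehat M_k^*f\|^2 ,
\]
since $\mathcal M^\perp$ has codimension $d$. For $f\perp\mathcal M$ we have $M_k^*f=0$ by the structural fact above, so $\|\widehat M_k^*f\|=\|(\widehat M_k-M_k)^*f\|\le\|\widehat M_k-M_k\|_{\mathcal S}=O_p(n^{-1/2})$. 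Squaring gives $\widehat\theta_{d+1}=O_p(n^{-1})$, and the same bound holds for all $j\ge d+1$ because the eigenvalues are nonincreasing.

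\textbf{Step (iv).} For $j\ge d+1$ and $i\le d$, orthonormality of the $\widehat\psi$'s gives
\[
\langle\psi_i,\widehat\psi_j\rangle=\langle\psi_i-\widehat\psi_i,\widehat\psi_j\rangle ,
\qquad\text{so}\qquad |\langle\psi_i,\widehat\psi_j\rangle|\le\|\psi_i-\widehat\psi_i\|=O_p(n^{-1/2})
\]
by (ii). The $L^2$ norm of $\sum_{i\le d}\langle\psi_i,\widehat\psi_j\rangle\psi_i$ is the square root of the sum of these $d$ squared coefficients, hence $O_p(n^{-1/2})$. The stated identity with $\widehat\psi_j-f_j$ is just the orthogonal decomposition of $\widehat\psi_j$ along $\mathcal M\oplus\mathcal M^\perp$.

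\textbf{Main obstacle.} I expect the hardest step to be (i), specifically making the $\psi$-mixing variance bound hold uniformly in $(u,v)$ so that it can be integrated. The covariance inequality has to be applied to the $L^2$-valued variables $Y_{t+k}\otimes Y_t$ in the Hilbert space $\mathcal S$, rather than pointwise, and that requires $E\|Y_t\|^4<\infty$ in $L^2(\mathcal I)$. I would first check that this moment condition is part of C1--C4. If it is not, the fallback is to apply the pointwise argument and control the integral through the Cauchy--Schwarz constant $\{E Y_t(u)^4\}^{1/2}\{E Y_t(v)^4\}^{1/2}$, which is integrable when $\int E Y_t(u)^4\,du<\infty$.
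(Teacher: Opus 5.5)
The paper gives no proof of this statement. It is reproduced from Bathia, Yao and Ziegelmann (2010) among the referenced results, and none of the paper's own arguments invokes it, so there is nothing internal to compare against. Judged on its own, your proof is correct.

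Step (i) is a faithful operator version of the fixed-$p$ sampling argument in the proof of Theorem~\ref{Hetep1}. One phrasing needs tidying: the pointwise variance bound is not a uniform $C/n$ but $C(u,v)/n$. Here $C(u,v)\lesssim E\{Z_{t+k}(u)^2Z_t(v)^2\}$ with $Z_t=Y_t-EY_t$. This follows from the $\psi$-mixing inequality $|\operatorname{Cov}(X,Y)|\le\psi(h)\,E|X|\,E|Y|$, together with Cauchy--Schwarz for the finitely many overlapping lags. Then $\int\!\!\int C(u,v)\,du\,dv\lesssim E(\|Z_{t+k}\|^2\|Z_t\|^2)\le E\|Z_t\|^4$. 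So the functional fourth-moment condition (the analogue of Condition~\ref{C4}) is exactly what is needed. Neither pointwise fourth moments nor a Hilbert-space-valued covariance inequality is required; your ``fallback'' is really the primary argument.

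Steps (ii)--(iv) are also sound, and they differ usefully from what the paper does in its own theorems.

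\emph{Step (ii).} The paper only controls the signal subspace, through the invariant-subspace Lemma~\ref{quoted:lam2011-lemma3}. Since (ii) concerns individual eigenfunctions, Bosq's eigenfunction-wise bound with sign alignment is the right tool. Every $j\le d$ has a positive gap because the nonzero eigenvalues are distinct and $\theta_{d+1}=0$.

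\emph{Step (iii).} Courant--Fischer with test space $\mathcal M^\perp$ gives $\widehat\theta_{d+1}\le\sum_k\|\widehat M_k-M_k\|_{\mathcal S}^2$ directly. This avoids expanding $\widehat\theta_j=\sum_k\|\widehat M_k^*\widehat\psi_j\|^2$, which would first need (iv) to control the $\mathcal M$-component of $\widehat\psi_j$. The crux is $M_k^*f=0$ for $f\perp\mathcal M$, and this is where the condition that future noise is uncorrelated with the present curve enters. Your convention $M_k(u,v)=\operatorname{Cov}\{Y_{t+k}(u),Y_t(v)\}$ matches $\bm\Sigma_y(k)=\operatorname{Cov}(\bm y_{t+k},\bm y_t)$. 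It places the lead on the argument that $K$ retains rather than integrates out, which is what makes the step work.

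\emph{Step (iv).} The identity $\langle\psi_i,\widehat\psi_j\rangle=\langle\psi_i-\widehat\psi_i,\widehat\psi_j\rangle$ holds for any orthonormal choice of the non-unique null eigenfunctions. Bessel's inequality even gives $\sum_{j\ge d+1}\langle\psi_i,\widehat\psi_j\rangle^2\le\|\psi_i-\widehat\psi_i\|^2$, which is uniform in $j$.
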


\setcounter{quotedlemma}{0}
\begin{quotedlemma}[Lam, Yao and Bathia (2011), Lemma 1]
\label{quoted:lam2011-lemma1}
Under (3) with Conditions 1--7 in \S\S~2.1 to 2.3, we have
\begin{equation*}
\begin{aligned}
\|\Sigma_f(k)\|_2&\asymp N^{1-\delta}\asymp\|\Sigma_f(k)\|_{\min},\\
\|\Sigma_{f,\epsilon}(k)\|_2&=O(N^{1-\delta/2}).
\end{aligned}
\end{equation*}
\end{quotedlemma}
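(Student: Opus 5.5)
The plan is to work in the setting of \cite{lam2011estimation} and reduce both claims to bounds on the rotated factor process. There, $\bm y_t=\bm A\bm x_t+\bm\epsilon_t$ with $\bm A$ not normalized. One writes the QR decomposition $\bm A=\bm Q\bm R$ with $\bm Q\in\bm O_{N,r}$ and $\bm R$ an $r\times r$ upper-triangular matrix, and sets $\bm f_t=\bm R\bm x_t$. Then
\begin{equation*}
\Sigma_f(k)=\bm R\,\bm\Sigma_x(k)\bm R^\top,\qquad
\Sigma_{f,\epsilon}(k)=\bm R\,\bm\Sigma_{x\epsilon}(k).
\end{equation*}
Since $\bm Q$ has orthonormal columns, $\|\bm R\|=\|\bm A\|$ and $\|\bm R\|_{\min}=\|\bm A\|_{\min}$. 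So everything comes down to the singular values of $\bm A$ and to the boundedness assumptions on $\bm\Sigma_x(k)$ and $\bm\Sigma_{x\epsilon}(k)$ in their Conditions 1--7.

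\textbf{Step 1: singular values of $\bm A$.} I would show $\|\bm A\|^2\asymp N^{1-\delta}\asymp\|\bm A\|_{\min}^2$. The upper bound uses their strength condition $\|\bm a_j\|^2\asymp N^{1-\delta}$ together with $r$ fixed: $\|\bm A\|^2\le\|\bm A\|_F^2=\sum_j\|\bm a_j\|^2=O(N^{1-\delta})$. For the lower bound on $\|\bm A\|_{\min}$ I would invoke the accompanying non-degeneracy (near-orthogonality) assumption on the columns. I would write $\bm A^\top\bm A=\bm\Lambda^{1/2}\bm C\bm\Lambda^{1/2}$, with $\bm\Lambda=\operatorname{diag}(\|\bm a_j\|^2)$ and $\bm C$ the correlation matrix of the columns. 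Then $\lambda_{\min}(\bm A^\top\bm A)\ge\min_j\|\bm a_j\|^2\,\lambda_{\min}(\bm C)$, and $\lambda_{\min}(\bm C)$ is bounded away from zero by that assumption.

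\textbf{Step 2: $\Sigma_f(k)$.} Their conditions give that $\bm\Sigma_x(k)$ is a fixed-dimensional, full-rank matrix with bounded entries, uniformly in $N$. Hence $\|\bm\Sigma_x(k)\|$ and $\|\bm\Sigma_x(k)\|_{\min}$ are both of constant order. Submultiplicativity then gives
\begin{equation*}
\|\bm R\|_{\min}^2\,\|\bm\Sigma_x(k)\|_{\min}\le\|\Sigma_f(k)\|_{\min}\le\|\Sigma_f(k)\|\le\|\bm R\|^2\|\bm\Sigma_x(k)\|,
\end{equation*}
and Step 1 yields $N^{1-\delta}$ on both sides.

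\textbf{Step 3: $\Sigma_{f,\epsilon}(k)$.} The matrix $\bm\Sigma_{x\epsilon}(k)$ is $r\times N$ with uniformly bounded entries. Hence $\|\bm\Sigma_{x\epsilon}(k)\|\le\|\bm\Sigma_{x\epsilon}(k)\|_F=O(\sqrt{rN})=O(N^{1/2})$. Combining with Step 1 gives
\begin{equation*}
\|\Sigma_{f,\epsilon}(k)\|\le\|\bm R\|\,\|\bm\Sigma_{x\epsilon}(k)\|=O(N^{(1-\delta)/2}N^{1/2})=O(N^{1-\delta/2}).
\end{equation*}

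\textbf{Main obstacle.} The delicate step is the lower bound $\|\bm A\|_{\min}^2\gtrsim N^{1-\delta}$ in Step 1. Column norms alone do not prevent near-collinearity, so the proof must pinpoint exactly which assumption in \cite{lam2011estimation} controls the conditioning of $\bm A$ (or of $\bm R$). If that assumption is instead phrased directly on $\bm R$, or on $\Sigma_f(k)$ via the factor process, I would bypass Step 1's correlation argument and read the lower bound off that condition. Steps 2 and 3 are then unchanged.
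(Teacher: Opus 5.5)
The paper does not prove this lemma; it only quotes it from Lam, Yao and Bathia (2011). Your route is correct and is essentially the standard argument for it: the QR reduction $\bm f_t=\bm R\bm x_t$, the equality of the singular values of $\bm R$ and $\bm A$, submultiplicativity, and the bound $\|\bm\Sigma_{x\epsilon}(k)\|\leq\|\bm\Sigma_{x\epsilon}(k)\|_F=O(N^{1/2})$.

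The step you flag is closed by their non-collinearity condition $\min_{\theta}\|\bm a_i-\sum_{j\neq i}\theta_j\bm a_j\|^2\asymp N^{1-\delta}$. For a unit vector $\theta\in\mathbb R^r$, choose $i$ with $|\theta_i|\geq r^{-1/2}$. Then $\|\bm A\theta\|=|\theta_i|\,\|\bm a_i+\sum_{j\neq i}(\theta_j/\theta_i)\bm a_j\|\gtrsim r^{-1/2}N^{(1-\delta)/2}$, which gives $\|\bm A\|_{\min}^2\gtrsim N^{1-\delta}$ directly. Equivalently, in your formulation, $[\bm C^{-1}]_{ii}$ is the reciprocal squared distance of $\bm a_i/\|\bm a_i\|$ from the span of the other normalized columns, so it is $\asymp1$. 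Hence $\lambda_{\max}(\bm C^{-1})\leq\operatorname{tr}(\bm C^{-1})=O(r)$, which is exactly the lower bound on $\lambda_{\min}(\bm C)$ you asserted.

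One small point in Step~2: full rank plus bounded entries does not by itself keep $\|\bm\Sigma_x(k)\|_{\min}$ away from zero as $N$ grows. You need $\bm\Sigma_x(k)$ not to vary with $N$ (as when $\bm x_t$ does not depend on $N$), or an explicit uniform lower bound on it.
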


\begin{quotedlemma}[Lam, Yao and Bathia (2011), Lemma 2]
\label{quoted:lam2011-lemma2}
Under (3) and Condition 5 in \S~2.1, we have for $0\leq k\leq k_0$,
\begin{equation*}
\|\widetilde\Sigma_f(k)-\Sigma_f(k)\|_2=O_P(N^{1-\delta}T^{-1/2}),
\end{equation*}
\begin{equation*}
\|\widetilde\Sigma_\epsilon(k)-\Sigma_\epsilon(k)\|_2=O_P(NT^{-1/2}),
\end{equation*}
\begin{equation*}
\begin{aligned}
\|\widetilde\Sigma_{f,\epsilon}(k)-\Sigma_{f,\epsilon}(k)\|_2
&=O_P(N^{1-\delta/2}T^{-1/2})\\
&=\|\widetilde\Sigma_{\epsilon,f}(k)-\Sigma_{\epsilon,f}(k)\|_2.
\end{aligned}
\end{equation*}
Moreover, $\|f_t\|_2^2=O_P(N^{1-\delta})$ for all integers $t\geq0$.
\end{quotedlemma}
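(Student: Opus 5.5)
The plan is to follow the entrywise-variance-plus-Frobenius strategy already used in the proof of Lemma \ref{Mdiff}, but to keep track of the scale of each block separately. The notation is that of Lam, Yao and Bathia (2011). The model is $y_t=Af_t+\epsilon_t$ with $N$-dimensional observations, $T$ time points and a fixed number $r$ of factors. Tildes denote population-mean-centered lag-$k$ sample covariances, $T^{-1}\sum_t u_{t+k}v_t^\top$. Re-centering at the sample mean only adds rank-one terms of smaller order, exactly as in \eqref{eq:proof-t1-sample-centering-error}, so I work with population-centered quantities throughout.

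\textbf{Step 1: the scale of each coordinate.} Write $\Sigma_f(0)=\operatorname{Cov}(f_t)$. By the first part of Lemma \ref{quoted:lam2011-lemma1}, $\|\Sigma_f(0)\|_2\asymp N^{1-\delta}$, so every coordinate $f_{t,i}$ has variance $O(N^{1-\delta})$. The coordinates of $\epsilon_t$ have bounded variance.

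For the fourth moments I would use the moment part of their Condition 5, read in its natural scaled form: $E|f_{t,i}|^4=O(N^{2-2\delta})$ and $E|\epsilon_{t,j}|^4=O(1)$. Equivalently, $N^{-(1-\delta)/2}f_t$ has uniformly bounded fourth moments. I will also use the mixing part of Condition 5, namely summability $\sum_h h\,\psi(h)^{1/2}<\infty$ uniformly in $N$. This scaling of the factor moments is the main obstacle: if Condition 5 only bounds raw moments, the factor variance rate must be derived from $\|\Sigma_f(0)\|_2\asymp N^{1-\delta}$ plus a hypercontractivity-type assumption. I would first check whether their Condition 5 is stated for the standardized factors; if it is not, I would impose that standardization.

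\textbf{Step 2: entrywise variances.} For products $q_t=u_{t+k,a}v_{t,b}$, Cauchy--Schwarz gives $E q_t^2\le (E u^4)^{1/2}(E v^4)^{1/2}$. The $\psi$-mixing covariance inequality gives $|\operatorname{Cov}(q_0,q_h)|\le C(Eq^2)\,\psi(h-k)^{1/2}$ for $h>k$, and the finitely many overlapping lags are handled by Cauchy--Schwarz. Hence each entry of a sample autocovariance has variance at most $C\,Eq^2/T$. This gives:
\begin{itemize}
\item $O(N^{2-2\delta}/T)$ for the entries of $\widetilde\Sigma_f(k)$;
\item $O(N^{1-\delta}/T)$ for the entries of the cross terms $\widetilde\Sigma_{f,\epsilon}(k)$ and $\widetilde\Sigma_{\epsilon,f}(k)$;
\item $O(1/T)$ for the entries of $\widetilde\Sigma_\epsilon(k)$.
\end{itemize}

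\textbf{Step 3: summing to Frobenius norms.} Summing the entrywise variances and using $\|\cdot\|_2\le\|\cdot\|_F$ with Markov's inequality yields the three displayed rates:
\begin{itemize}
\item $E\|\widetilde\Sigma_f(k)-\Sigma_f(k)\|_F^2\le r^2 C N^{2-2\delta}/T$, so the factor block error is $O_P(N^{1-\delta}T^{-1/2})$, since $r$ is fixed.
\item $E\|\widetilde\Sigma_{f,\epsilon}(k)-\Sigma_{f,\epsilon}(k)\|_F^2\le rN\cdot CN^{1-\delta}/T$, so the cross-term error is $O_P(N^{1-\delta/2}T^{-1/2})$. The transposed block $\widetilde\Sigma_{\epsilon,f}(k)$ obeys the identical bound.
\item $E\|\widetilde\Sigma_\epsilon(k)-\Sigma_\epsilon(k)\|_F^2\le N^2C/T$, so the noise block error is $O_P(NT^{-1/2})$.
\end{itemize}
Since $k_0$ is fixed, these bounds hold uniformly over $0\le k\le k_0$.

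\textbf{Step 4: the norm of the factors.} Finally,
\[
E\|f_t\|_2^2=\operatorname{tr}\Sigma_f(0)+\|Ef_t\|_2^2\le r\|\Sigma_f(0)\|_2+\|Ef_t\|_2^2=O(N^{1-\delta}),
\]
where the mean is either taken to be zero or is of the same order. Markov's inequality then gives $\|f_t\|_2^2=O_P(N^{1-\delta})$ for every $t$, by stationarity.
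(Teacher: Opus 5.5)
The paper gives no proof of this statement. It is quoted from Lam, Yao and Bathia (2011) among the referenced results and is not used in the paper's own arguments; their high-dimensional sampling bound is Lemma~\ref{Mdiff}. Your plan is essentially that lemma's entrywise-variance/Frobenius argument, run separately on the factor, cross and noise blocks, and the bookkeeping is right. Per-entry second moments of order $N^{2-2\delta}$, $N^{1-\delta}$ and $1$, summed over $r^2$, $rN$ and $N^2$ entries, give the three rates. Sample-mean recentring only adds terms of order $N^{1-\delta}T^{-1}$, $N^{1-\delta/2}T^{-1}$ and $NT^{-1}$.

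The one step you leave open is the scaled bound $E|f_{t,i}|^4=O(N^{2-2\delta})$. It should be derived, not imposed, and the alternative reading you worry about cannot occur. Uniformly bounded raw fourth moments of $f_t$ would give $\|\Sigma_f(0)\|_2\le\operatorname{tr}\Sigma_f(0)=O(1)$, since $r$ is fixed, contradicting $\|\Sigma_f(0)\|_2\asymp N^{1-\delta}$ when $\delta<1$.

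In Lam, Yao and Bathia's setting the strength sits in the loadings. The unnormalised loading matrix factors as $A=QR$ with $Q^\top Q=I_r$, and $f_t=Rx_t$, where $x_t$ has fixed dimension and bounded fourth moments. Moreover $\|R\|_2^2=\|A\|_2^2\le\|A\|_F^2=O(N^{1-\delta})$. Then $|f_{t,i}|\le\|R\|_2\|x_t\|$ gives $E|f_{t,i}|^4=O(N^{2-2\delta})$ and $Ef_{t,i}^2=O(N^{1-\delta})$ directly. Steps 1 and 4 therefore no longer need the quoted Lemma~\ref{quoted:lam2011-lemma1}, whose hypotheses (Conditions 1--7) exceed those of the present statement.

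The same representation gives the source's shorter route, which avoids $N$-dependent moments altogether:
\begin{itemize}
\item $\widetilde\Sigma_f(k)-\Sigma_f(k)=R\{\widetilde\Sigma_x(k)-\Sigma_x(k)\}R^\top$, which is $\|R\|_2^2\,O_P(T^{-1/2})$ because $r$ is fixed.
\item The cross block is at most $\|R\|_2\,\|\widetilde\Sigma_{x,\epsilon}(k)-\Sigma_{x,\epsilon}(k)\|_F=O(N^{(1-\delta)/2})\,O_P(N^{1/2}T^{-1/2})$.
\item $\|f_t\|_2^2\le\|R\|_2^2\|x_t\|_2^2$, which also disposes of the mean term you hedge on in Step 4.
\end{itemize}
Your entrywise count is then applied only to $x$--$\epsilon$ and $\epsilon$--$\epsilon$ products, all with $N$-free moment bounds.

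In either version, the $\psi$-mixing and moment assumptions must be placed on $(x_t,\epsilon_t)$ jointly, not on $y_t$ alone, because the block products are not functions of $y_t$. The paper's Lemma~\ref{Mdiff} avoids exactly this separate factor/noise mixing by working with $\bm z_t=\bm y_t-E\bm y_t$ directly. The price is that it obtains only the aggregate rate $\|\bm E_k\|_F=O_p(pn^{-1/2})$, not the finer block rates stated here.
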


\begin{quotedlemma}[Lam, Yao and Bathia (2011), Lemma 3, p.~915]
\label{quoted:lam2011-lemma3}
Suppose $A$ and $A+E$ are $n\times n$ symmetric matrices and that
$Q=[Q_1\ Q_2]$, where $Q_1$ has size $n\times r$ and $Q_2$ has size
$n\times(n-r)$, is an orthogonal matrix such that $\operatorname{span}(Q_1)$
is an invariant subspace for $A$; that is,
$A\times\operatorname{span}(Q_1)\subset\operatorname{span}(A)$.
Partition the matrices $Q^\top A Q$ and $Q^\top E Q$ as follows:
\begin{equation*}
 Q^\top A Q=
 \begin{pmatrix}D_1&0\\0&D_2\end{pmatrix},
 \qquad
 Q^\top E Q=
 \begin{pmatrix}E_{11}&E_{21}^\top\\E_{21}&E_{22}\end{pmatrix}.
\end{equation*}
If
\begin{equation*}
 \operatorname{sep}(D_1,D_2)
 =\min_{\lambda\in\lambda(D_1),\,\mu\in\lambda(D_2)}
       |\lambda-\mu|>0,
\end{equation*}
where $\lambda(M)$ denotes the set of eigenvalues of the matrix $M$,
and $\|E\|_2\leq\operatorname{sep}(D_1,D_2)/5$, then there exists a
matrix $P\in\mathbb R^{(n-r)\times r}$ with
\begin{equation*}
 \|P\|_2\leq\frac{4}{\operatorname{sep}(D_1,D_2)}\|E_{21}\|_2
\end{equation*}
such that the columns of
$\widehat Q_1=(Q_1+Q_2P)(I+P^\top P)^{-1/2}$ define an orthonormal
basis for a subspace that is invariant for $A+E$.
\end{quotedlemma}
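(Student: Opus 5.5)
This is the classical invariant-subspace perturbation theorem of Stewart, quoted in the form of Golub and Van Loan's Theorem~8.1.10. I would prove it with Stewart's Riccati-equation argument in three steps: reduce invariance to a quadratic matrix equation, bound the inverse of a Sylvester operator, and close a fixed-point argument on a ball of radius $4\|E_{21}\|_2/\operatorname{sep}(D_1,D_2)$.

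\textbf{Step 1: reduction to a Riccati equation.} Work in the basis $Q$, so $Q^\top(A+E)Q$ has diagonal blocks $D_1+E_{11}$, $D_2+E_{22}$ and off-diagonal blocks $E_{21}^\top$, $E_{21}$. The span of $Q_1+Q_2P$ is invariant for $A+E$ exactly when $(A+E)(Q_1+Q_2P)=(Q_1+Q_2P)L$ for some $r\times r$ matrix $L$. Reading off the top block gives $L=D_1+E_{11}+E_{21}^\top P$. Substituting this into the bottom block gives
\begin{equation*}
\mathcal T(P):=(D_2+E_{22})P-P(D_1+E_{11})=-E_{21}+PE_{21}^\top P .
\end{equation*}
Multiplying $Q_1+Q_2P$ on the right by $(I+P^\top P)^{-1/2}$ leaves its span unchanged and makes the columns orthonormal. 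So it suffices to find a solution $P$ of this equation with the stated norm bound.

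\textbf{Step 2: invertibility of the Sylvester operator.} All matrices involved are symmetric. By Weyl's inequality, the eigenvalues of $D_i+E_{ii}$ move by at most $\|E\|_2$ from those of $D_i$. Hence
\begin{equation*}
\operatorname{sep}(D_1+E_{11},D_2+E_{22})\geq\operatorname{sep}(D_1,D_2)-2\|E\|_2\geq\tfrac35\operatorname{sep}(D_1,D_2).
\end{equation*}
This gap is positive, so $\mathcal T$ is invertible. Diagonalizing both blocks orthogonally shows that $\|\mathcal T^{-1}\|\le 1/\operatorname{sep}$ in the Frobenius norm. In the spectral norm one gets a bounded multiple of $1/\operatorname{sep}$ via the Bhatia--Davis--McIntosh bound.

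\textbf{Step 3: fixed-point argument.} Define $\Phi(P)=\mathcal T^{-1}(-E_{21}+PE_{21}^\top P)$ on the ball $\{\|P\|\le 4\|E_{21}\|_2/\operatorname{sep}\}$. The radius is at most $4/5$ because $\|E_{21}\|_2\le\|E\|_2\le\operatorname{sep}/5$. I would show that $\Phi$ maps this ball into itself and is a contraction, using $\|PE_{21}^\top P-P'E_{21}^\top P'\|\le(\|P\|+\|P'\|)\|E_{21}\|\,\|P-P'\|$. Banach's fixed-point theorem then produces the required $P$.

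The main obstacle is the constant bookkeeping in Step 3. With the crude spectral-norm inverse bound, the factors $3/5$, $\pi/2$ and $1+\|P\|^2$ multiply to slightly more than $4$, so the ball does not obviously map into itself. To fix this, I would first run the contraction in the Frobenius norm, or use Stewart's sharper sequence iteration $P_{k+1}=\mathcal T_0^{-1}(\cdots)$ built from the unperturbed $D_1,D_2$, with $E_{11},E_{22}$ treated as part of the perturbation. That route yields the constant $4$ under the hypothesis $\|E\|_2\le\operatorname{sep}/5$ as in Golub and Van Loan. Failing a clean self-contained constant, I would cite that theorem directly, since the lemma is quoted verbatim from Lam, Yao and Bathia (2011).
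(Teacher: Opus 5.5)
The paper gives no proof of this lemma. It is quoted verbatim from Lam, Yao and Bathia (2011), ultimately Stewart's Riccati argument via Golub and Van Loan. Your Steps 1--2 are the right skeleton, and your fallback to citation is exactly the paper's treatment.

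The genuine gap is that neither repair you propose for Step 3 yields the spectral-norm statement.

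\begin{itemize}
\item \emph{Frobenius-norm contraction.} This proves a different result: hypothesis $\|E\|_F\le\operatorname{sep}/5$, conclusion $\|P\|_F\le 4\|E_{21}\|_F/\operatorname{sep}$. Under only $\|E\|_2\le\operatorname{sep}/5$, the Frobenius ball need not map into itself, since $\|E_{21}\|_F$ can exceed $\|E_{21}\|_2$ by $\sqrt{\min(r,n-r)}$. And a Frobenius bound on $P$ does not give the required bound in terms of $\|E_{21}\|_2$.
\item \emph{Iterating with the unperturbed operator $\mathcal T_0(P)=D_2P-PD_1$.} This is worse. On the ball $\|P\|_2\le\rho=4\|E_{21}\|_2/\operatorname{sep}$ you get $\|\Phi(P)\|_2\le\|\mathcal T_0^{-1}\|_{2\to2}\,(1+\tfrac85+\tfrac{16}{25})\|E_{21}\|_2$. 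This is at most $\rho$ only if $\|\mathcal T_0^{-1}\|_{2\to2}\le1/\operatorname{sep}$; with the factor $\pi/2$ it is about $5.09\|E_{21}\|_2/\operatorname{sep}$.
\end{itemize}

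In fact, take $\kappa=\|\mathcal T^{-1}\|_{2\to2}\le 5\pi/(6\operatorname{sep})$. The self-map condition $\rho\ge\kappa\|E_{21}\|_2(1+\rho^2)$ has no solution once $\kappa\|E_{21}\|_2>1/2$, and your estimates allow this at $\|E_{21}\|_2=\operatorname{sep}/5$. So when the spectra of $D_1$ and $D_2$ interlace, no choice of radius rescues the quadratic fixed point. The constant-$4$ bookkeeping needs $\|\mathcal T^{-1}\|\le 1/\mathrm{gap}$. That holds in the Frobenius norm, or in the spectral norm when one spectrum lies entirely above the other.

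There are two ways to close this.

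\begin{itemize}
\item \emph{The paper's use.} In the only place the paper uses the lemma, $Q_1=\bm A$, $D_1=D\succeq g\bm I_r$ and $D_2=0$. Hence $D_1+E_{11}\succeq\tfrac45 g\bm I$ and $D_2+E_{22}\preceq\tfrac15 g\bm I$. The representation $\mathcal T^{-1}(C)=-\int_0^\infty e^{t(D_2+E_{22})}\,C\,e^{-t(D_1+E_{11})}\,dt$ gives $\|\mathcal T^{-1}\|_{2\to2}\le 5/(3g)$, with no $\pi/2$. Your Step 3 then closes: $\|\Phi(P)\|_2\le\frac{5}{3g}(1+\frac{16}{25})\|E_{21}\|_2=\frac{41}{15}\|E_{21}\|_2/g\le\rho$, with Lipschitz constant at most $\frac{5}{3g}\cdot2\rho\|E_{21}\|_2\le\frac{8}{15}$.
\item \emph{The general interlaced case.} Avoid the quadratic term altogether. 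Let $\widehat{\mathcal S}$ be the spectral subspace of $A+E$ for the eigenvalues within $\|E\|_2$ of $\lambda(D_1)$; it is $r$-dimensional by Weyl's inequality and continuity along $A+tE$. Let $\widehat Q_2,\Lambda_2$ be the complementary eigenvectors and eigenvalues. Then $X=\widehat Q_2^\top Q_1$ solves the \emph{linear} Sylvester equation $\Lambda_2X-X(D_1+E_{11})=\widehat Q_2^\top Q_2E_{21}$, whose two spectra are at distance at least $\tfrac35\operatorname{sep}$. The Bhatia--Davis--McIntosh bound you already invoke gives $\sin\theta_{\max}=\|X\|_2\le\frac{5\pi}{6}\|E_{21}\|_2/\operatorname{sep}\le\pi/6$. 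Writing $\widehat{\mathcal S}$ as the span of $Q_1+Q_2P$ gives $\|P\|_2=\tan\theta_{\max}\le 3.1\|E_{21}\|_2/\operatorname{sep}$. Normalizing as in your Step 1 then gives $\widehat Q_1$.
\end{itemize}
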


\setcounter{quotedtheorem}{6}
\begin{quotedtheorem}[Zhang, Cai and Wu, Theorem 7: General robust $\sin\Theta$ theorem]
\label{quoted:zhang-theorem7}
Assume $\mathcal G\in[p]\times[p]$ is $b$-sparse. Suppose one observes
the symmetric matrix $N=M+Z$, where $\operatorname{rank}(M)=r$,
$Z$ is any symmetric perturbation, and the eigenvectors of $M$ are
$U\in\mathbb O_{p,r}$. Let $\widehat U^{(t)}$ be the intermediate
matrix in Algorithm 1 of the cited paper with $t$ iterations. There exists a constant
$c>0$ such that if the incoherence condition
\begin{equation*}
 \frac{I(U)\|M\|}{\lambda_r(M)}
 \leq\frac{cp}{\eta br(b\wedge r)}
\end{equation*}
is satisfied and $\eta\|\Gamma(Z)\|\leq c\lambda_r(M)$, then
\begin{equation*}
 \|\sin\Theta(\widehat U^{(t)},U)\|
 \leq4\|\Gamma(Z)\|/\lambda_r(M)+2^{-(t+3)}/\eta.
\end{equation*}
Here, $\eta$ is defined in equation (34) of \citet{zhang2022heteroskedastic}, corresponding to~\eqref{eq:mask-amplification} in the main paper. In particular, if
\begin{equation*}
 T=\Omega\left(\log\frac{\lambda_r(M)}
                    {\eta\|\Gamma(Z)\|}\vee1\right),
\end{equation*}
the final outcome $\widehat U$ of Algorithm 1 of the cited paper with corrupted index
set $\mathcal G$ satisfies
\begin{equation*}
 \|\sin\Theta(\widehat U,U)\|
 \lesssim \frac{\|\Gamma(Z)\|}{\lambda_r(M)}\wedge1.
\end{equation*}
\end{quotedtheorem}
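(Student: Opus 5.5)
The statement is the robust $\sin\Theta$ bound of \cite{zhang2022heteroskedastic}, so the plan is to reproduce their deterministic iteration analysis rather than use anything probabilistic. Write $N^{(0)}=\Gamma(N)=M+\Gamma(Z)-G(M)$ and track the iterate error $Z^{(t)}=N^{(t)}-M$. Since $N^{(t+1)}=G(\widetilde N^{(t)})+\Gamma(N^{(t)})$ and $\Gamma(N^{(t)})=\Gamma(M)+\Gamma(Z)$ for every $t$, the error always splits as
\begin{equation*}
Z^{(t+1)}=\Gamma(Z)+G(\widetilde N^{(t)}-M).
\end{equation*}
The whole proof then reduces to a contraction for the imputed part $e_t=\|G(\widetilde N^{(t)}-M)\|$. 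The target recursion is $e_{t+1}\leq\tfrac12 e_t+C\|\Gamma(Z)\|$, together with an initial bound $e_0\lesssim\eta\|M\|$ coming from $G(M)$. I would read the $1/\eta$ in the remainder $2^{-(t+3)}/\eta$ as tracking this initial imputed error; checking that it comes out in that normalization is part of the bookkeeping below.

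\textbf{Contraction step.} Let $\widehat U$ be the leading eigenvectors of $N^{(t)}$. Decompose
\begin{equation*}
\widetilde N^{(t)}-M=P_{\widehat U}Z^{(t)}P_{\widehat U}+\bigl(P_{\widehat U}MP_{\widehat U}-M\bigr),
\end{equation*}
which has rank at most $2r$. Both pieces can be written as $UX+YU^\top+(\text{second-order terms})$, with $\|X\|,\|Y\|\lesssim\|Z^{(t)}\|$ and with $\widehat U-UO$ small by Davis--Kahan. For such matrices $b$-sparsity gives
\begin{equation*}
\|G(UX)\|\leq\sqrt{b}\,\max_i\|e_i^\top U\|\,\|X\|\leq\sqrt{b\,I(U)r/p}\,\|X\|.
\end{equation*}
The remaining rank-$\leq2r$ residual is handled with the definition of $\eta$. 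The incoherence hypothesis, with its factor $\|M\|/\lambda_r(M)$, is exactly what makes the accumulated constant at most $1/2$ once the second-order terms, which scale like $\|Z^{(t)}\|^2\|M\|/\lambda_r^2$, are controlled. The smallness condition $\eta\|\Gamma(Z)\|\leq c\lambda_r(M)$ keeps $\|Z^{(t)}\|\leq\lambda_r/4$ throughout the induction, so Weyl's inequality and Davis--Kahan remain applicable at every step.

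\textbf{Conclusion.} Unrolling the recursion gives
\begin{equation*}
\|Z^{(t)}\|\leq2\|\Gamma(Z)\|+\lambda_r(M)\,2^{-(t+4)}/\eta.
\end{equation*}
Davis--Kahan, $\|\sin\Theta(\widehat U^{(t)},U)\|\leq2\|Z^{(t)}\|/\lambda_r(M)$, then yields the stated bound. The "in particular" clause follows by choosing $T$ so that $2^{-T}/\eta\lesssim\|\Gamma(Z)\|/\lambda_r$, and by noting that $\sin\Theta$ is always at most $1$.

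\textbf{Main obstacle.} The hardest step is the contraction. The difficulty is that the first-order terms involve $\widehat U$, not $U$, and incoherence is only assumed for $U$. I would first try the expansion $\widehat U\widehat U^\top=UU^\top+(\text{terms linear in }Z^{(t)}\text{ with a factor }U)+O(\|Z^{(t)}\|^2/\lambda_r^2)$. Every first-order term then carries a $U$ factor on one side, so the row-norm bound above applies. The quadratic remainder would be bounded crudely via $\eta$ and absorbed through the smallness condition. If that absorption fails, the fallback is a leave-one-row-out argument that controls $\max_i\|e_i^\top\widehat U\|$ directly.
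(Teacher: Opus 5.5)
The paper does not prove this statement. It imports it from Zhang, Cai and Wu and uses only the intermediate iterate bound $\|N^{(i)}-M\|\le 2\|\Gamma(Z)\|+\lambda_r(M)2^{-(i+4)}/\eta$ from their proof, in the proof of Theorem~\ref{Hetep1}; your final display is exactly that bound. Your architecture is the right one. The $\Gamma$-part of every iterate is frozen, so $Z^{(t+1)}=\Gamma(Z)+G(\widetilde N^{(t)}-M)$; your row-norm inequality $\|G(UX)\|\le\sqrt b\max_i\|e_i^\top U\|\,\|X\|$ is correct; and Davis--Kahan finishes the argument. However, two steps fail as written. Both come from misplacing where the factor $\kappa=\|M\|/\lambda_r(M)$ in the incoherence hypothesis is spent.

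\textbf{Initial imputed error.} The bound $\|G(M)\|\le\eta\|M\|=\eta\kappa\lambda_r(M)$ is far too weak. Since $Z^{(0)}=\Gamma(Z)-G(M)$ and the smallness hypothesis controls only $\Gamma(Z)$, your induction hypothesis $\|Z^{(0)}\|\le\lambda_r/4$ needs $\|G(M)\|\le c\lambda_r/\eta$. With only $\eta\|M\|\ge\lambda_r$, the very first Davis--Kahan step is vacuous, even for the diagonal mask with $\kappa=1$. The remainder also could never come out as $2^{-(t+3)}/\eta$. The missing step is the two-sided incoherence bound $|M_{ij}|\le\|M\|\,\|e_i^\top U\|\,\|e_j^\top U\|\le\|M\|I(U)r/p$. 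Combine it with $\|G(H)\|\le b\max_{(i,j)\in\mathcal G}|H_{ij}|$ (Schur test for a $b$-sparse mask) to get $\|G(M)\|\le bI(U)r\|M\|/p\le c\lambda_r(M)/\{\eta(b\wedge r)\}$. This is where $\kappa$ is consumed.

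\textbf{Contraction.} Suppose the quadratic remainder scales like $\|Z^{(t)}\|^2\|M\|/\lambda_r^2$ and is bounded crudely through $\eta$. Absorbing it then requires $\eta\kappa\|Z^{(t)}\|\lesssim\lambda_r$, whereas the hypotheses give only $\eta\|Z^{(t)}\|\lesssim\lambda_r$. The incoherence condition restricts $I(U)$ and does nothing to a term already bounded via $\eta$, so the step fails for ill-conditioned $M$. You need a $\kappa$-free remainder. Getting one also removes your worry about incoherence of $\widehat U$, so no leave-one-out argument is needed.

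Take $Z=Z^{(t)}$ and let $\widehat U$ be the leading eigenvectors of $N^{(t)}$. Weyl gives $\|MP_{\widehat U_\perp}\|=\|(N^{(t)}-Z)P_{\widehat U_\perp}\|\le2\|Z\|$, and one checks that
\[
\widetilde N^{(t)}-M=P_UZP_{\widehat U}-P_{\widehat U_\perp}MP_U-MP_{\widehat U_\perp}+R,\qquad R=(P_{\widehat U}-P_U)ZP_{\widehat U}-P_{\widehat U_\perp}M(P_{\widehat U}-P_U).
\]
The first three terms have an outer factor $U$ with coefficient at most $2\|Z\|$. Your row-norm bound therefore controls their $G$-part by $5\sqrt{bI(U)r/p}\,\|Z\|$. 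The term $R$ has rank at most $2r$. Since $\|P_{\widehat U}-P_U\|\le\frac43\|Z\|/\lambda_r$ when $\|Z\|\le\lambda_r/4$, we get $\|R\|\le4\|Z\|^2/\lambda_r$, so $\|G(R)\|\le4\eta\|Z\|^2/\lambda_r$. Set $e_{-1}:=\|G(M)\|$. Then
\[
e_t\le\{5\sqrt{bI(U)r/p}+4\eta\|Z^{(t)}\|/\lambda_r\}\|Z^{(t)}\|\le\tfrac14(\|\Gamma(Z)\|+e_{t-1}).
\]
The first coefficient is $O(\sqrt c)$, because the hypothesis implies $I(U)\le cp/(br)$ (as $\eta,\kappa,b\wedge r\ge1$). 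With the corrected initial bound, the induction closes with $\|Z^{(t)}\|\le2c\lambda_r/\eta$ at every step.
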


\begin{quotedlemma}[Zhang, Cai and Wu, supplementary Lemma 4]
\label{quoted:zhang-lemma4}
If $M\in\mathbb R^{m_1\times m_2}$, $\operatorname{rank}(M)=r$,
$\mathcal G\subseteq[m_1]\times[m_2]$,
$\max_i|\{j:(i,j)\in\mathcal G\}|\leq b$,
$\max_j|\{i:(i,j)\in\mathcal G\}|\leq b$, then we have
\begin{equation*}
 \|G(M)\|\leq\sqrt{b\wedge r}\|M\|,
 \qquad
 \|\Gamma(M)\|\leq(\sqrt{b\wedge r}+1)\|M\|.
\end{equation*}
In particular, if $M\in\mathbb R^{p\times p}$ is any square matrix
and $\Delta(M)$ is the matrix $M$ with diagonal entries set to 0, then
\begin{equation*}
 \|\Delta(M)\|\leq2\|M\|.
\end{equation*}
Here, the factor \lq\lq 2\rq\rq{} in the statement above cannot be improved.
\end{quotedlemma}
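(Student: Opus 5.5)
The statement splits into three claims of increasing ease, and I would prove them in order: the bound on $\|G(M)\|$, then the bound on $\|\Gamma(M)\|$, then the diagonal specialization together with sharpness of the constant $2$. Throughout, write $M_{i\cdot}$ and $M_{\cdot j}$ for the rows and columns of $M$. The elementary fact I will use repeatedly is that every row and every column norm, and hence every entry, is bounded by $\|M\|$.

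\emph{The $\sqrt b$ bound.} Fix unit vectors $u,v$ and expand $u^\top G(M)v=\sum_{(i,j)\in\mathcal G}u_iM_{ij}v_j$. Apply Cauchy--Schwarz over the index set $\mathcal G$, pairing $u_iM_{ij}$ with $v_j$:
\begin{equation*}
|u^\top G(M)v|\leq\Big(\sum_{(i,j)\in\mathcal G}u_i^2M_{ij}^2\Big)^{1/2}\Big(\sum_{(i,j)\in\mathcal G}v_j^2\Big)^{1/2}.
\end{equation*}
For the first factor, $\sum_{(i,j)\in\mathcal G}u_i^2M_{ij}^2\leq\sum_iu_i^2\|M_{i\cdot}\|^2\leq\|M\|^2$. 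For the second factor, each column $j$ contains at most $b$ indices of $\mathcal G$, so $\sum_{(i,j)\in\mathcal G}v_j^2\leq b\|v\|^2=b$. Taking the supremum over $u,v$ gives $\|G(M)\|\leq\sqrt b\|M\|$, and this step uses no rank assumption.

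\emph{The $\sqrt r$ bound.} Use the Frobenius norm directly: $\|G(M)\|\leq\|G(M)\|_F\leq\|M\|_F\leq\sqrt{\operatorname{rank}(M)}\,\|M\|=\sqrt r\|M\|$. Combining the two bounds yields $\|G(M)\|\leq\sqrt{b\wedge r}\,\|M\|$.

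\emph{The bound on $\Gamma$.} Since $\Gamma(M)=M-G(M)$, the triangle inequality gives $\|\Gamma(M)\|\leq(1+\sqrt{b\wedge r})\|M\|$.

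\emph{The diagonal case.} Take $\mathcal G=\{(i,i)\}$, so that $b=1$ and $\Delta=\Gamma$. The previous bound then gives $\|\Delta(M)\|\leq2\|M\|$ for any square $M$, because the $\sqrt b$ argument did not require low rank. For sharpness, take $M=\frac2p\bm J-\bm I_p$, where $\bm J$ is the all-ones matrix. Its eigenvalues are $1$ and $-1$, so $\|M\|=1$. On the other hand, $\Delta(M)=\frac2p(\bm J-\bm I_p)$ has norm $2(p-1)/p\to2$. Hence the constant $2$ cannot be improved.

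The only step that needs care is the Cauchy--Schwarz pairing in the $\sqrt b$ bound. The $\sqrt b$ factor must be charged to the column multiplicity through $v_j^2$, while the row-norm bound $\|M_{i\cdot}\|\leq\|M\|$ absorbs the $u$ side. If the pairing is done the other way, one only gets a factor $b$.
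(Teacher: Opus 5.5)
The paper gives no proof of this lemma. It is quoted from the supplement of Zhang, Cai and Wu and used as a black box, so there is no in-paper argument to compare against. Your proof is correct and complete on its own terms:
\begin{itemize}
\item The Cauchy--Schwarz pairing gives the rank-free bound $\sqrt b\,\|M\|$, including for rectangular $M$. It uses row norms $\|e_i^\top M\|\le\|M\|$ on the $u$ side and column multiplicity at most $b$ on the $v$ side.
\item The chain $\|G(M)\|\le\|G(M)\|_F\le\|M\|_F\le\sqrt r\,\|M\|$ gives the $\sqrt r$ bound.
\item The triangle inequality gives the $\Gamma$ bound.
\item The reflection $\frac2p\bm J-\bm I_p$ has spectral norm $1$ while its off-diagonal part has norm $2-2/p$, so $2$ is sharp as a dimension-free constant.
\end{itemize}
Passing to the limit $p\to\infty$ is the right reading of ``cannot be improved.'' For each fixed $p$ the ratio $\|\Delta(M)\|/\|M\|$ never equals $2$. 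Equality would force $Mx=-D(M)x$ with $x$ supported on indices where $|M_{ii}|=\|M\|$, and on those indices the rows and columns of $M$ vanish off the diagonal.

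One correction to your closing remark: the reverse pairing is not worse. If you pair $u_i$ with $M_{ij}v_j$, the row multiplicity gives $\sum_{(i,j)\in\mathcal G}u_i^2\le b$. Column norms give $\sum_{(i,j)\in\mathcal G}M_{ij}^2v_j^2\le\max_j\|M_{\cdot j}\|^2\le\|M\|^2$. So by symmetry it yields the same $\sqrt b\,\|M\|$. A factor $b$ appears when the multiplicity is spent on both sides. Examples are bounding $\sum_{\mathcal G}u_i^2M_{ij}^2$ by $b\max_{ij}M_{ij}^2$, or splitting $\mathcal G$ into $b$ partial permutations and applying the triangle inequality. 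This does not affect the validity of your proof.
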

\end{document}